\documentclass[12pt]{article}
\usepackage{amsfonts}
\usepackage{amsmath}
\usepackage{amssymb}
\usepackage{amsthm}
\usepackage{amssymb}
\usepackage[all]{xy}
\usepackage{xcolor}
\newif\ifshowrevisions
\showrevisionstrue
\definecolor{revisionpurple}{RGB}{140,55,180}
\definecolor{revisiongreen}{RGB}{0,122,78}
\ifshowrevisions\else
\colorlet{revisionpurple}{black}
\colorlet{revisiongreen}{black}
\fi
\usepackage{authblk}
\usepackage{graphicx}
\usepackage{geometry}
\usepackage{enumerate}
\usepackage{subcaption}
\usepackage[colorlinks=true, citecolor=blue, linkcolor=blue, hyperfootnotes=false, hypertexnames=false]{hyperref}

\usepackage{float}
\usepackage{setspace}
\usepackage{bbm}
\usepackage{bm}
\usepackage{booktabs}
\usepackage{tabu}
\usepackage[utf8]{inputenc}
\usepackage{tikz}
\usepackage{multicol}
\usepackage{multirow}
\usepackage{natbib}
\usepackage{pdflscape}
\usepackage[normalem]{ulem}
\usepackage{appendix}
\usepackage{placeins}
\usepackage{etoolbox}
\AtBeginEnvironment{quote}{\par\singlespacing\small}

\newcommand{\cube}{\{-1,1\}^n}

\newcommand{\Dn}{\mathcal D_n}

\theoremstyle{plain}
\newtheorem{theorem}{Theorem}
\newtheorem{corollary}{Corollary}
\newtheorem{lemma}{Lemma}

\newtheorem{proposition}[theorem]{Proposition}

\newtheorem{axiom}{Axiom}

\theoremstyle{definition}

\newtheorem{example}[theorem]{Example}

\title{Model Complexity and Restrictiveness}
\author{Keaton Ellis and Sara Neff\thanks{Ellis: Department of Economics, Monash University (\href{mailto:keaton.ellis@monash.edu}{keaton.ellis@monash.edu}); Neff: University of California, Berkeley (\href{mailto:sara_neff@berkeley.edu}{sara\_neff@berkeley.edu}). We are grateful to Federico Echenique, Michael Jansson, Shachar Kariv, Demian Pouzo, Chris Shannon, and seminar participants of the UC Berkeley Theory Lunch, 2025 California Econometrics Conference, and ESIF 2026 for helpful comments. Keaton acknowledges and is  grateful for support from the Foundations of Data Science Institute (FODSI), funded by the National Science Foundation TRIPODS program, and for the hospitality of the Simons Institute for the Theory of Computing at the University of California, Berkeley.} \\
Latest draft \href{https://drive.google.com/file/d/1dzmsolp9OFVlFpjzJK9b6smO_bfSQLvY/view?usp=sharing}{here}}

\begin{document}

\maketitle
\begin{abstract}
    We study the measure of restrictiveness proposed in \cite{fudenberg2023restrictiveness} to evaluate complexity of economic models using synthetic data. We show that Rademacher complexity is an affine transformation of a particular case of a consistent finite-sample estimate of restrictiveness. Our results show that restrictiveness inherits a cardinal interpretation as a bound on generalization error while avoiding the inability of limiting Rademacher complexity to distinguish between some falsifiable models.
\end{abstract}

\noindent\textbf{Keywords:} complexity, model, restrictiveness.

\newpage

\section{Introduction}\label{sec:introduction}

A model's fit is more informative when it rules out plausible behavior. Measures of model complexity help assess how much of a good fit reflects a model's flexibility. The economic theory literature has recently proposed \textit{restrictiveness} (\citealp{fudenberg2023restrictiveness}), a measure describing the flexibility of a model by evaluating its expected distance (according to some discrepancy function) to synthetic data from a pre-specified data-generating process, whose support is referred to as the ``eligible set''. A completely unrestrictive model is one which always achieves within sample loss of zero on the synthetic data. Both the discrepancy function and the eligible set are free choices by the analyst, making restrictiveness quite a general measure applicable across economic domains. This measure is viewed as complementary to measures of predictive quality estimated on real data, such as \textit{completeness} (\citealp{fudenberg2022measuring}).
In this framework, the ideal model is one that is fully complete and fully restrictive - it captures the true structure of the data well via properly-specified structure \citep{liang2026using}. 
In this paper, we investigate how a measure such as restrictiveness relates to complexity measures from computer science.

We connect restrictiveness to Rademacher complexity, a measure from statistical learning theory with deep ties to falsifiability. Rademacher complexity asks how well a function class can fit pure noise: given a sample of $n$ model covariates, it measures the maximal correlation the class can achieve with an independently drawn vector of uniform $\{-1,1\}$ outputs. In practice, Rademacher complexity bounds \textit{generalization error} \textemdash the maximum error when estimating the mean of any function using a sample (\citealp{mitzenmacher2017probability}). Our main result is an equivalence: when the synthetic data is approximately iid Rademacher and discrepancy is measured by a correlation-based loss function, then for every $n$, the gap between Rademacher complexity and (one minus) an empirical analog of restrictiveness can be made arbitrarily small. These results show that restrictiveness contains classic measures of model flexibility as a foundation, and subsequently generalizes them through a free choice of (i) the discrepancy function and (ii) the data-generating process, as opposed to using Rademacher random noise. Furthermore, we show that the axiomatization of restrictiveness in \cite{fudenberg2023restrictiveness} can be easily augmented to provide an axiomatization of Rademacher complexity.

Our main results imply that, since Rademacher complexity bounds generalization error, in some cases a finite sample version of restrictiveness can be converted into a bound on a model's generalization error. More broadly, restrictiveness inherits a similar type of cardinal interpretation while avoiding limitations of Rademacher complexity. As $n \to \infty$, the Rademacher complexity of a binary-valued function class vanishes if the class has finite VC dimension, that is, if the class rules out at least one outcome pattern on every sufficiently large set of distinct covariates \citep[Lemma~4 and Theorem~6]{bartlett2002rademacher}. Thus, under correlation discrepancy and increasingly fine versions of our eligible set, every such model becomes fully restrictive \textemdash in the limit, this environment cannot discriminate among falsifiable models at all. This limitation is meaningful for economic analysis. For example, in the finite-state binary-choice setting, expected utility theory (EUT) and some strict extensions, such as Choquet expected utility (CEU, \citealp{schmeidler1989subjective}), are both falsifiable with finite data, thus stating that some extensions of EUT possess the same level of flexibility \citep{basu2020falsifiability}. While there is a difference in ``how much'' finite data is needed to distinguish EUT and CEU, this distinction applies only in finite samples and not in the limiting case.


The limitation, however, is particular to the eligible set of iid Rademacher random variables. Continuing the example, an eligible prior whose support is contained in CEU and strictly contains EUT generates a wedge between the two models. Thus, we find that the particular importance of restrictiveness arises from its ability to set economically-meaningful eligible sets. At the same time, doing so does not automatically retain the same interpretation for bounding generalization error.

Despite this, we identify sufficient conditions under which restrictiveness retains a useful cardinal interpretation in economically meaningful environments. In particular, with convexity and boundedness assumptions on the discrepancy, and monotonicity assumptions on the model and eligible set of functions, an affine transformation of finite-sample restrictiveness bounds the generalization error of a discrepancy function, i.e. the expected largest difference between population discrepancy and sample discrepancy across the model class. These types of monotonicity conditions on model behavior are natural in many economically relevant settings.
Overall, we find that restrictiveness provides a practical generalization of Rademacher complexity to economic contexts while retaining a cardinal connection to generalization error concepts under natural conditions.

The formal analysis makes these connections by relating the population and finite-sample objects. While restrictiveness is a population object that measures how well a function class can match functions from an eligible set, Rademacher complexity is a finite sample object measuring correlation to a vector of Rademacher random variables that explicitly depends on a sample of size $n$. To connect these measures, we first define empirical analogs of the objects that appear in restrictiveness.
We show that a connection can be made between these objects in two ways. Proposition \ref{prop:equivalence_in_sample} says that there is an equivalence between this empirical analog of restrictiveness and empirical Rademacher complexity when the sample is known. Proposition \ref{prop:bounded_difference} says that if only the sample \textit{size} $n$ is known, the expected difference (over samples of size $n$) between the measures
can be made arbitrarily small. {Propositions \ref{prop:offset_equivalence_in_sample} and \ref{prop:offset_bounded_difference} extend these connections to squared-error discrepancy and offset Rademacher complexity, which adds a quadratic penalty to the random-sign fit criterion (\citealp{liang2015learning}).} Lemma \ref{lemma:convergence} provides conditions under which our sample analog consistently estimates the restrictiveness numerator with the eligible prior fixed. We show that convergence holds for a large class of discrepancy functions, including the function analogous to Rademacher complexity (defined below), in a setting where the covariate-generating process is unknown. Note that this final result also allows for restrictiveness to be applied in a broader class of empirical settings than controlled laboratory experiments (e.g., \citealp{gentzkow2024pricing}), and provides a microfoundation for a related discussion in work by \cite{fudenberg2026model} (see below). {Finally, Proposition \ref{prop:endpt-bound} gives sufficient conditions under which an affine transformation of finite-sample restrictiveness bounds expected generalization error in discrepancy for economically meaningful eligible priors.}

Section \ref{sec:notation_and_objects} introduces our notation and explains the objects that appear in the statement of our results. Section \ref{restrad} states our main results, as well as the lemmas we use to prove them. 
Section \ref{sec:generalization_error} gives the generalization-error bound. {Section \ref{sec:interpretation} concludes.}

\section{Notation and Objects}\label{sec:notation_and_objects}

\subsection{Notation}\label{sec:notation}

Let $\mathcal{X}$ be a closed and bounded infinite subset of $\mathbb{R}^m$ for some finite $m \in \mathbb{N}$, $X$ be a random vector of $\mathcal{X}$ with distribution $P_X$, and $x \in \mathcal{X}$ be an arbitrary element of $\mathcal{X}$, potentially a draw from $X$. Similarly, let {$\mathcal{Y} = [-1, 1]$} be the outcome space, $Y$ be a random vector over $\mathcal{Y}$, and $y \in Y$ be an arbitrary element of $\mathcal{Y}$, potentially a draw from $Y$.\footnote{{Predictions may be restricted to a subset of $\mathcal Y$. For example, a $k$-ary classification problem has predictions in $\{ -1 + \frac{2(i-1)}{k-1}|i=1,\ldots,k\}$. Setting $k=3$ gives $\{-1,0,1\}$, while $k=4$ gives $\{-1,-\frac{1}{3},\frac{1}{3},1\}$. Alternatively, consider a decision maker choosing bundles $(x_1,x_2)$ along a linear budget set. Relative demand for the first good, $\frac{x_1}{x_1+x_2}$, takes values in $[0,1]\subseteq\mathcal Y$.} Note further that any bounded problem can be transformed into one in this space. Our propositions then apply after the corresponding changes of scale and normalization.} We use $S_n = (x_1, \dots, x_n)$ to denote an iid sample of covariates of arbitrary size $n$, dropping the $n$ subscript when there is no confusion.

We are interested in analyzing functions from $\mathcal{X}$ to $\mathcal{Y}$. Let $\bar{\mathcal{F}} = \{ f:\mathcal{X}\rightarrow \mathcal{Y} \}$ be the set of all such functions, and let $\mathcal{F} \subseteq \bar{\mathcal{F}}$ be an arbitrary set of functions. A parametric \textit{model} is a set of functions $\mathcal{F}_\Theta = \{ f_\theta \}_{\theta \in \Theta}$ defined by a nonempty compact metric parameter space $\Theta$. We also define a specific subset of $\bar{\mathcal{F}}$. Let $\bar{\mathcal{F}}_\sigma = \{ f:\mathcal{X} \rightarrow \{ -1, 1 \} \}$ be the set of all functions to the set $\{ -1, 1 \}$. We let $\sigma \in \bar{\mathcal{F}}_\sigma$ denote an arbitrary function mapping $\mathcal{X}$ to $\{-1, 1\}$, and $\sigma_n \in \{-1, 1\}^n$ be a vector, again dropping the $n$ subscript when there is no confusion.

Next, let $\mathcal{X}_\varepsilon$ be a coarsening of $\mathcal X$ obtained from a grid with side lengths at most $\varepsilon$. Formally, let $a_d$ and $b_d$ denote the minimum and maximum values of coordinate $d=1,\dots,m$ on $\mathcal X$. The grid is the Cartesian product
\[
\times_{d=1}^m\left\{\left[a_d,\ a_d+\varepsilon\right),\left[a_d+\varepsilon,\ a_d+2\varepsilon\right),\dots,\left[a_d+\left\lfloor\frac{b_d-a_d}{\varepsilon}\right\rfloor\varepsilon,\ b_d\right]\right\}.
\]
The intervals preceding the final interval are indexed by $j=0,\dots,\lfloor(b_d-a_d)/\varepsilon\rfloor-1$ and have endpoints $a_d+j\varepsilon$ and $a_d+(j+1)\varepsilon$. If $b_d-a_d<\varepsilon$, the list consists only of $[a_d,b_d]$; in particular, a degenerate coordinate contributes $\{a_d\}$. Intersect each grid cell with $\mathcal X$ and discard empty cells to obtain $\mathcal X_\varepsilon$.

Let $x_\varepsilon \in \mathcal{X}_\varepsilon$ denote an arbitrary element of $\mathcal{X}_\varepsilon$, $X_\varepsilon$ denote a random vector over $\mathcal{X}_\varepsilon$ with distribution $P_{X_\varepsilon}$, where $P_{X_\varepsilon}(X_\varepsilon = x_\varepsilon) = P_X(X \in x_\varepsilon)$. We also define analogous function classes over $\mathcal{X}_\varepsilon$: $\bar{\mathcal{F}}_\varepsilon = \{ f_\varepsilon: \mathcal{X}_\varepsilon \rightarrow \mathcal{Y} \}$ is the set of all functions, $\bar{\mathcal{F}}_{\sigma, \varepsilon} = \{ f_\varepsilon: \mathcal{X}_\varepsilon \rightarrow \{ -1, 1 \} \}$ is the set of all functions mapping to $\{-1, 1\}$, and $\mathcal{F}_\varepsilon$ denotes an arbitrary set of functions. We abuse notation by using $\mathcal{F}_\varepsilon$ to describe both a set of functions mapping from $\mathcal{X}_\varepsilon$ and the analogous functions in $\mathcal{X}$. Naturally, for some function $f_\varepsilon$ there exists an analogous $f$ such that if $x \in x_\varepsilon$, then $f(x) = f_\varepsilon(x_\varepsilon)$.

As final pieces of notation, let a \textit{discrepancy function} $d:\bar{\mathcal{F}} \times \bar{\mathcal{F}} \rightarrow \mathbb{R}_+$ be a function over the set of all functions such that $d(f, f') = 0$ if and only if $f(X) = f'(X)$ a.s. under $P_X$. We assume the discrepancy function is of the form \[ d(f_\theta, f) = E_{P_X}[g(X, \theta; f)], \] 

\noindent where $g(x, \theta; f)$ encompasses a large number of relevant objective functions including squared error and KL divergence. We assume (i) there exists an envelope function $G(x)$ such that $|g(x, \theta; f)| \leq G(x)$ for all $x \in \mathcal{X}$, $\theta \in \Theta$ {and $f\in\mathcal F$}, and $E[G(X)]<\infty$; and (ii) {for almost every $f$ under $\lambda_{\mathcal F}$, $g(X,\theta;f)$ is continuous in $\theta$ with probability one}.

Given a sample $S$, we define an empirical analog to $d$, $d_n$, defined as \[ d_n(f_\theta, f) = \frac{1}{n}\sum_{i=1}^n g(x_i, \theta; f) \]

Let $\lambda_\mathcal{F}$ be the uniform distribution over the set $\mathcal{F}$. We restrict attention to nonempty eligible sets for which this uniform distribution is well defined. For finite $\mathcal F$, each function receives probability $1/|\mathcal F|$. We assume that model predictions are jointly measurable in $(x,\theta)$, eligible predictions in $(x,f)$, and losses in $(x,\theta,f)$.

\subsection{Objects}\label{sec:objects}

We introduce eight objects that we work with to obtain our results - four related to Rademacher complexity, and four related to restrictiveness. For quick reference, the measures are summarized in Table \ref{tab:objects}.

\begin{table}
    \centering
    \resizebox{\textwidth}{!}{%
    \begin{tabular}{llr}
        Name & Notation & Expression \\ \midrule
        Restrictiveness & $r(\mathcal{F}_{\Theta},\mathcal{F},d)$ & $\frac{\mathbb{E}_{\lambda_\mathcal{F}}[\text{ inf}_{f_\theta \in \mathcal{F}_{\Theta}}\ d(f_\theta, f)]}{\mathbb{E}_{\lambda_\mathcal{F}}\big[d(f_{\text{base}},f)\big]}$ \\[5mm]
        
        Approximation error & $e(\mathcal{F}_{\Theta},\mathcal{F},d)$ & $\mathbb{E}_{\lambda_\mathcal{F}} \Big[\underset{f_\theta \in \mathcal{F}_{\Theta}}{\text{ inf}}\ d(f_\theta, f)\Big]$ \\[5mm]
        
        Empirical approximation error & $\tilde{e}_n(\mathcal{F}_{\Theta},\mathcal{F},d_n,S)$ & $\mathbb{E}_{\lambda_{\mathcal{F}}} \Big[\underset{f_\theta \in \mathcal{F}_{\Theta}}{\text{inf}}\ d_n(f_\theta,f)\Big]$ \\[5mm]

        Average approximation error & $e_n(\mathcal{F}_{\Theta},\mathcal{F},d_n)$ & $\mathbb{E}_S \mathbb{E}_{\lambda_{\mathcal{F}}} \Big[\underset{f_\theta \in \mathcal{F}_{\Theta}}{\text{inf}}\ d_n(f_\theta,f)\Big]$ \\[5mm]

        Empirical RC & $\tilde{R}_n(\mathcal{F}_\Theta,S)$ & $\mathbb{E}_\sigma \Bigg[ \underset{f_\theta \in \mathcal{F}_{\Theta}}{\text{sup}}\ \frac{1}{n} \sum^n_{i=1} \sigma_i \cdot f_\theta(x_i) \Bigg]$ \\[5mm]

        Rademacher complexity (RC) & $R_n(\mathcal{F}_{\Theta})$ & $\mathbb{E}_S \mathbb{E}_\sigma \Bigg[ \underset{f_\theta \in \mathcal{F}_{\Theta}}{\text{sup}}\ \frac{1}{n} \sum^n_{i=1} \sigma_i \cdot f_\theta(x_i) \Bigg]$ \\[5mm]

        Empirical Offset RC & $\tilde{R}^{\text{off}}_n(\mathcal{F}_{\Theta},S,c)$ & $ \mathbb{E}_\sigma \Bigg[ \underset{f_\theta \in \mathcal{F}_{\Theta}}{\text{sup}}\ \frac{1}{n} \sum^n_{i=1} \left[\sigma_i \cdot f_\theta(x_i) - c f_\theta(x_i)^2\right] \Bigg]$ \\[5mm]

        Offset RC & $R^{\text{off}}_n(\mathcal{F}_{\Theta},c)$ & $\mathbb{E}_S \mathbb{E}_\sigma \Bigg[ \underset{f_\theta \in \mathcal{F}_{\Theta}}{\text{sup}}\ \frac{1}{n} \sum^n_{i=1} \left[\sigma_i \cdot f_\theta(x_i) - c f_\theta(x_i)^2\right]  \Bigg]$ \\[5mm]


    \end{tabular}
    }
    \caption{A summary of objects, where RC stands for Rademacher complexity}
    \label{tab:objects}
\end{table}

\paragraph{Rademacher complexity}

Following \cite{mitzenmacher2017probability}, for a given sample of covariates $S$, \textit{empirical Rademacher complexity} is defined as \[\tilde{R}_n(\mathcal{F}_\Theta,S) = \mathbb{E}_{\sigma} \left[ \underset{f_\theta \in \mathcal{F}_{\Theta}}{\text{sup}}\ \frac{1}{n} \sum^n_{i=1} \sigma_i \cdot f_\theta(x_i) \right].\]

\noindent That is, it calculates the average maximum correlation with uniform random noise achievable by class $\mathcal{F}_\Theta$ on a sample $S$; the average is taken over the vector $\sigma = (\sigma_1, \dots \sigma_n)$ where each $\sigma_i$ is an independent and identically distributed Rademacher random variable, i.e. takes values $1$ and $-1$ with equal probability. \textit{Rademacher complexity} is simply the expectation over samples $S$ with respect to $P_X$, defined as \[R_n(\mathcal{F}_{\Theta}) = \mathbb{E}_S \mathbb{E}_\sigma \left[ \underset{f_\theta \in \mathcal{F}_{\Theta}}{\text{sup}}\ \frac{1}{n} \sum^n_{i=1} \sigma_i \cdot f_\theta(x_i) \right].\]

Both measures depend on the sample size $n$: $R_n$ averages over samples drawn from $P_X$, while empirical Rademacher complexity additionally depends on the realized sample $S$. They control uniform differences between sample averages and population expectations, yielding bounds on convergence rates and generalization error (\citealp{bartlett2002rademacher,bartlett2005local}). For prediction risk, these bounds are applied to the corresponding loss class. Rademacher penalties also provide a criterion for model selection by balancing empirical fit against the complexity of the model class (\citealp{lozano2000model,bartlett2002model}).

Offset Rademacher complexity is introduced in \cite{liang2015learning} as a notion related to Rademacher complexity which is appropriate for learning with square loss. It differs from Rademacher complexity in that it includes a negative quadratic term for some $c \geq 0$. Define \textit{empirical offset Rademacher complexity} as \[\tilde{R}^{\text{off}}_n(\mathcal{F}_\Theta,S,c) = \mathbb{E}_{\sigma} \left[ \underset{f_\theta \in \mathcal{F}_{\Theta}}{\text{sup}}\ \frac{1}{n} \sum^n_{i=1} \left[\sigma_i \cdot f_\theta(x_i) - c f_\theta(x_i)^2\right] \right],\]

\noindent and define \textit{offset Rademacher complexity} as \[R^{\text{off}}_n(\mathcal{F}_\Theta,c) = \mathbb{E}_{S} \mathbb{E}_{\sigma} \left[ \underset{f_\theta \in \mathcal{F}_{\Theta}}{\text{sup}}\ \frac{1}{n} \sum^n_{i=1} \left[\sigma_i \cdot f_\theta(x_i) - c f_\theta(x_i)^2\right] \right].\]

\paragraph{Restrictiveness} As defined in \cite{fudenberg2023restrictiveness}, let \[e(\mathcal{F}_{\Theta},\mathcal{F},d) = \mathbb{E}_{\lambda_\mathcal{F}}[\text{ inf}_{f_\theta \in \mathcal{F}_{\Theta}}\ d(f_\theta, f)]\] denote the \textit{approximation error} of a function class $\mathcal{F}_\Theta$ relative to some arbitrary function class $\mathcal{F}$ and discrepancy function $d$. \textit{Restrictiveness} scales this measure by the performance of a baseline function $f_{base} \in \mathcal{F}_\Theta$:

\[ r(\mathcal{F}_{\Theta},\mathcal{F},d) = \frac{e(\mathcal{F}_{\Theta},\mathcal{F},d)}{e(f_{base},\mathcal{F},d)} = \frac{\mathbb{E}_{\lambda_\mathcal{F}}[\text{ inf}_{f_\theta \in \mathcal{F}_{\Theta}}\ d(f_\theta, f)]}{\mathbb{E}_{\lambda_\mathcal{F}}\big[d(f_{\text{base}},f)\big]}. \]

\noindent As in Assumption 1 of \cite{fudenberg2023restrictiveness}, assume the baseline denominator is positive. Note that since $f_{base} \in \mathcal{F}_\Theta$ and $d(f, g) \geq 0$ for all functions $f, g \in \bar{\mathcal{F}}$, it must be the case that $r(\mathcal{F}_{\Theta},\mathcal{F},d) \in [0, 1]$.

Since Rademacher complexity is indexed by sample size $n$ and restrictiveness is a population measure, we also define two empirical analogs of approximation error to make meaningful comparisons. Given a sample $S$ of size $n$, let the \textit{empirical (approximation) error} be $\tilde{e}_n(\mathcal{F}_{\Theta},\mathcal{F},d_n, S) = \mathbb{E}_{\lambda_{\mathcal{F}}} \left[\underset{f_\theta \in \mathcal{F}_{\Theta}}{\text{inf}}\ d_n(f_\theta,f)\right]$, where $d_n$ is the empirical analog of $d$. Similar to Rademacher complexity, let the \textit{average (approximation) error} be the expected empirical (approximation) error over all possible samples $S$, defined as \[e_n(\mathcal{F}_{\Theta},\mathcal{F},d_n) = \mathbb{E}_S \mathbb{E}_{\lambda_{\mathcal{F}}} \left[\underset{f_\theta \in \mathcal{F}_{\Theta}}{\text{inf}}\ d_n(f_\theta,f)\right].\]

\section{Restrictiveness and Rademacher Complexity}\label{restrad}

Our main results relating restrictiveness and Rademacher complexity are contained in four propositions. Proposition \ref{prop:equivalence_in_sample} states that, for a sample of distinct covariates, when the discrepancy is one minus correlation and the eligible set consists of all functions mapping a sufficiently fine $\mathcal X_\varepsilon$ to $\{-1,1\}$, empirical Rademacher complexity equals one minus empirical approximation error. Building on this connection, Appendix~\ref{sec:axiomatization} augments the axioms of restrictiveness to characterize empirical Rademacher complexity. Under atomlessness, Proposition \ref{prop:bounded_difference} states that, for each sample size, the corresponding difference in expectation can be made arbitrarily small by choosing a sufficiently fine eligible partition. Propositions \ref{prop:offset_equivalence_in_sample} and \ref{prop:offset_bounded_difference} produce similar connections with squared error and offset Rademacher complexity. Lemma \ref{lemma:convergence} completes the comparison: for a fixed finite eligible set, average approximation error converges to the numerator of restrictiveness under our maintained assumptions.

\subsection{Main Results}\label{mainresults}

Lemma \ref{lemma:sigma_function} establishes an equivalence between empirical Rademacher complexity (the left hand side of Equation \ref{eq-lemma1}) and an object that replaces the Rademacher random variables with a set of functions that approximately map from the sample $S$ to $\{-1,1\}$. 

\begin{lemma}\label{lemma:sigma_function}
    Given a sample of unique covariates $S = (x_1, \dots, x_n)$, there exists a function class $\bar{\mathcal{F}}_{\sigma, \varepsilon}$ such that \begin{align}\label{eq-lemma1}\mathbb{E}_\sigma \Bigg[ \underset{f_\theta \in \mathcal{F}_{\Theta}}{\sup}\ \frac{1}{n} \sum^n_{i=1} \sigma_i \cdot f_\theta(x_i) \Bigg] = \mathbb{E}_{\lambda_{\bar{\mathcal{F}}_{\sigma, \varepsilon}}} \Bigg[ \underset{f_\theta \in \mathcal{F}_{\Theta}}{\sup}\ \frac{1}{n} \sum^n_{i=1} f(x_i) \cdot f_\theta(x_i) \Bigg],\end{align}

\end{lemma}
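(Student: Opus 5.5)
The plan is to choose $\varepsilon$ so fine that the $n$ distinct sample points lie in $n$ distinct cells of $\mathcal X_\varepsilon$. Then a uniformly drawn $f\in\bar{\mathcal F}_{\sigma,\varepsilon}$, evaluated at the sample, has exactly the law of an iid Rademacher vector.

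\textbf{Choice of $\varepsilon$.} Since the covariates $x_1,\dots,x_n$ are distinct, the minimum pairwise distance $\delta=\min_{i\neq j}\|x_i-x_j\|_\infty$ is strictly positive. Take any $\varepsilon<\delta$. Two points in the same grid cell differ by less than $\varepsilon$ in every coordinate, except possibly on the closed final interval, where the differences are at most $\varepsilon$; in either case $\varepsilon<\delta$ rules this out. Hence each $x_i$ lies in its own cell $c_i\in\mathcal X_\varepsilon$, and $c_i\neq c_j$ for $i\neq j$.

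\textbf{The measure.} $\mathcal X_\varepsilon$ is finite, since $\mathcal X$ is bounded and so only finitely many grid cells are nonempty. Therefore $\bar{\mathcal F}_{\sigma,\varepsilon}$ is finite, with $2^{|\mathcal X_\varepsilon|}$ elements, and $\lambda_{\bar{\mathcal F}_{\sigma,\varepsilon}}$ is well defined and assigns each function mass $2^{-|\mathcal X_\varepsilon|}$. Under this uniform law the values $\{f(c)\}_{c\in\mathcal X_\varepsilon}$ are iid uniform on $\{-1,1\}$, because the map $f\mapsto (f(c))_c$ is a bijection onto $\{-1,1\}^{\mathcal X_\varepsilon}$. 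Using the identification $f(x)=f_\varepsilon(x_\varepsilon)$, we get $f(x_i)=f_\varepsilon(c_i)$. Since the $c_i$ are distinct, the vector $(f(x_1),\dots,f(x_n))$ is a vector of distinct coordinates of a uniform element of the product space, so it is uniformly distributed on $\{-1,1\}^n$.

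\textbf{The identity.} The integrand $\sup_{f_\theta}\frac1n\sum_i f(x_i)f_\theta(x_i)$ depends on $f$ only through $(f(x_1),\dots,f(x_n))$. Write it as $\Phi(f(x_1),\dots,f(x_n))$ with
\[
\Phi(s)=\sup_{f_\theta\in\mathcal F_\Theta}\tfrac1n\sum_i s_i f_\theta(x_i).
\]
By the previous step and a change of variables (equivalently, a counting argument: each $s\in\{-1,1\}^n$ is attained by exactly $2^{|\mathcal X_\varepsilon|-n}$ functions), the right-hand side equals $2^{-n}\sum_{s}\Phi(s)=\mathbb E_\sigma[\Phi(\sigma)]$, which is the left-hand side. $\Phi$ is finite because $|f_\theta|\le 1$, so all expectations are finite sums of bounded terms.

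\textbf{Main obstacle.} The only delicate point is the grid-separation step: handling the half-open versus closed final intervals and degenerate coordinates. Using the sup-norm separation $\delta$ and strict inequality $\varepsilon<\delta$ handles these uniformly. The rest is a direct counting argument.
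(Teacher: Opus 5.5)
Your proof is correct and follows the paper's route: you choose $\varepsilon<\min_{i\neq j}\|x_i-x_j\|_\infty$ so that each sample point occupies its own cell of $\mathcal X_\varepsilon$, which makes the uniform law on $\bar{\mathcal F}_{\sigma,\varepsilon}$, restricted to $S$, the uniform law on $\{-1,1\}^n$; you simply write out the grid-separation and counting steps that the paper calls immediate. The one loose end is $n=1$, where the minimum over $i\neq j$ is empty, so state separately, as the paper does, that any partition works in that case.
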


\begin{proof}
    Note that incongruencies between the two arise only when there exist some $i \neq j$ such that $x_i$ and $x_j$ belong to the same partition, since a function representation cannot accommodate the flexibility of $\sigma_i \neq \sigma_j$, which is permitted on the left-hand side. If one can guarantee that $x_i$ and $x_j$ are not in the same partition for all $i \neq j$, then the equality is immediate. 
    
    For $n=1$, any partition suffices. For $n\ge2$, choosing $\varepsilon < \min_{i\neq j}||x_i-x_j||_\infty$ guarantees that the interval coarsening will be fine enough to place $x_i$ into a unique partition.
\end{proof}

Note that the uniform distribution $\lambda_{\bar{\mathcal{F}}_{\sigma,\varepsilon}}$ is well defined since $\bar{\mathcal{F}}_{\sigma,\varepsilon}$ is a finite set. In the proof, $\varepsilon$ is chosen such that the set of functions $\bar{\mathcal{F}}_{\sigma,\varepsilon}$ can be interpreted as containing all of the empirical content of all functions mapping from $\mathcal{X}$ to $\{-1,1\}$ when observing a sample $S$.

Below, Lemma \ref{lemma:inf_sup_correlation} says that by choosing a particular form of the empirical discrepancy $d_n$, the object on the right hand side of Equation \ref{eq-lemma1} equals one minus the minimum of $d_n$ over the set of functions $\mathcal{F}_{\Theta}$. Together, the two lemmas lead to Proposition \ref{prop:equivalence_in_sample}, which establishes an equivalence between empirical Rademacher complexity and empirical approximation error. 

\begin{lemma}\label{lemma:inf_sup_correlation}
    Fix a sample $S=(x_1,\ldots,x_n)$, a mapping $\sigma \in \bar{\mathcal{F}}_\sigma$, and assume a discrepancy function of the form $d_n:\bar{\mathcal{F}} \times \bar{\mathcal{F}}_\sigma \rightarrow \mathbb{R}$ such that $$d_n(f, \sigma) = 1 - \frac{\sum_{i=1}^nf(x_i)\sigma(x_i)}{n}. $$

    Then,
    
    $$\sup_{f_\theta \in \mathcal{F}_\Theta} \frac{\sum_{i=1}^nf_\theta(x_i)\sigma(x_i)}{n} = 1 - \inf_{f_\theta \in \mathcal{F}_\Theta} d_n(f_\theta, \mathbf{\sigma}).$$
\end{lemma}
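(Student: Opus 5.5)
The plan is to prove the identity directly from the definition of $d_n$, using the elementary duality between a supremum and an infimum under an order-reversing affine map. Write $A(f_\theta) = \frac{1}{n}\sum_{i=1}^n f_\theta(x_i)\sigma(x_i)$ for each $f_\theta \in \mathcal{F}_\Theta$. The definition gives $d_n(f_\theta,\sigma) = 1 - A(f_\theta)$ pointwise in $f_\theta$. The claim is then that $\sup_{f_\theta} A(f_\theta) = 1 - \inf_{f_\theta}\bigl(1 - A(f_\theta)\bigr)$.

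First I will check that both sides are finite real numbers. This makes the manipulation legitimate and avoids any $\pm\infty$ issues. Since $f_\theta(x_i) \in \mathcal{Y} = [-1,1]$ and $\sigma(x_i) \in \{-1,1\}$, each summand lies in $[-1,1]$. Hence $A(f_\theta) \in [-1,1]$ for every $\theta$, and $\mathcal{F}_\Theta$ is nonempty because $\Theta$ is nonempty. So the set $\{A(f_\theta): \theta\in\Theta\}$ is a nonempty bounded subset of $\mathbb{R}$, and both its supremum and the infimum of $\{1 - A(f_\theta)\}$ exist in $[-1,1]$ and $[0,2]$ respectively.

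Next I will use the identity $\inf_{t \in T}(1 - t) = 1 - \sup_{t\in T} t$ for a nonempty bounded set $T \subset \mathbb{R}$. I will verify it by two inequalities.
\begin{itemize}
\item For every $t\in T$ we have $1 - t \ge 1 - \sup T$, so $\inf(1-T) \ge 1 - \sup T$.
\item For any $\delta>0$ there is $t\in T$ with $t > \sup T - \delta$, hence $1 - t < 1 - \sup T + \delta$, so $\inf(1-T) \le 1 - \sup T$.
\end{itemize}
Applying this with $T = \{A(f_\theta)\}$ and rearranging gives the stated equality.

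There is no real obstacle here. The only point needing care is that the supremum need not be attained, which is why I argue with $\delta$ rather than with a maximizer. Continuity in $\theta$ and compactness of $\Theta$ are not needed for this step.
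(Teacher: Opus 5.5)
Your proof is correct and takes the same route as the paper, which writes the chain $\sup_{f_\theta}A(f_\theta)=-\inf_{f_\theta}\bigl(-A(f_\theta)\bigr)=1-\inf_{f_\theta}\bigl(1-A(f_\theta)\bigr)$ and then recognizes $d_n(f_\theta,\sigma)$. Your boundedness and nonemptiness check, and your two-inequality verification of $\inf(1-T)=1-\sup T$, make explicit what the paper's chain of equalities takes for granted.
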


\begin{proof}
    \begin{align*}
        \sup_{f_\theta \in \mathcal{F}_\Theta} \frac{\sum_{i=1}^nf_\theta(x_i)\sigma(x_i)}{n}
        & = - \inf_{f_\theta \in \mathcal{F}_\Theta} -\frac{\sum_{i=1}^nf_\theta(x_i)\sigma(x_i)}{n}
        \\
        & = 1 - \inf_{f_\theta \in \mathcal{F}_\Theta} \left(1 -\frac{\sum_{i=1}^nf_\theta(x_i)\sigma(x_i)}{n}\right)
        \\
        & = 1 - \inf_{f_\theta \in \mathcal{F}_\Theta} d_n(f_\theta, \mathbf{\sigma})
    \end{align*}
\end{proof}

Note that $d_n(f,\sigma)\geq 0$, since \[||(\sigma(x_1),\ldots,\sigma(x_n))||=\sqrt{n}\quad\text{and}\quad||(f(x_1),\ldots,f(x_n))||\leq\sqrt{n}.\] Also note that $d_n(f,\sigma)=0$ if and only if $f=\sigma$ when restricted to the sample $S$. Hence, the function $d_n$ satisfies coincidence, the singular requirement of discrepancy functions for restrictiveness (\citealp{fudenberg2023restrictiveness}).

Note also that the restriction of the second argument to $\bar{\mathcal{F}}_{\sigma}$ is important, since otherwise the second observation does not hold. For example, consider a $k$-ary classification problem with $k=3$ and $f(x)=0$. Then $d_n(f,f)=1$. An important sub-case is binary classification: if $k=2$, {both arguments are restricted to $\bar{\mathcal F}_\sigma$, and $d_n$ becomes a true discrepancy function on this class.}

\begin{proposition}\label{prop:equivalence_in_sample}
    Assume a discrepancy function $d_n$ as in Lemma \ref{lemma:inf_sup_correlation}. Given a sample of unique covariates $S = (x_1, \dots, x_n)$, there exists a function class $\bar{\mathcal{F}}_{\sigma, \varepsilon}$ such that \[ \tilde{R}_n(\mathcal{F}_\Theta, S) = 1 - \tilde{e}_n(\mathcal{F}_{\Theta}, \bar{\mathcal{F}}_{\sigma, \varepsilon}, d_n, S). \]
\end{proposition}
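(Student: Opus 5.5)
The plan is to chain Lemma \ref{lemma:sigma_function} and Lemma \ref{lemma:inf_sup_correlation}, with linearity of expectation bridging the two. First, by definition, $\tilde{R}_n(\mathcal{F}_\Theta,S)$ is exactly the left-hand side of Equation \ref{eq-lemma1}. Since the covariates in $S$ are unique, Lemma \ref{lemma:sigma_function} supplies an $\varepsilon$ with $\varepsilon<\min_{i\neq j}\|x_i-x_j\|_\infty$ when $n\ge 2$, and any $\varepsilon$ when $n=1$. For this $\varepsilon$ the corresponding finite class $\bar{\mathcal{F}}_{\sigma,\varepsilon}$ places each $x_i$ in its own cell, so
\[
\tilde{R}_n(\mathcal{F}_\Theta,S)=\mathbb{E}_{\lambda_{\bar{\mathcal{F}}_{\sigma,\varepsilon}}}\Big[\sup_{f_\theta\in\mathcal{F}_\Theta}\tfrac1n\textstyle\sum_{i=1}^n f(x_i)f_\theta(x_i)\Big].
\]
I will use this same class $\bar{\mathcal{F}}_{\sigma,\varepsilon}$ as the eligible set in the proposition.

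Second, I apply Lemma \ref{lemma:inf_sup_correlation} pointwise inside this expectation. Each $f\in\bar{\mathcal{F}}_{\sigma,\varepsilon}$, lifted to $\mathcal{X}$ by the paper's convention $f(x)=f_\varepsilon(x_\varepsilon)$ for $x\in x_\varepsilon$, is an element of $\bar{\mathcal{F}}_\sigma$. Taking $\sigma=f$ in the lemma gives
\[
\sup_{f_\theta}\tfrac1n\textstyle\sum_i f_\theta(x_i)f(x_i)=1-\inf_{f_\theta}d_n(f_\theta,f)
\]
for every such $f$.

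Third, I take the expectation under the uniform distribution $\lambda_{\bar{\mathcal{F}}_{\sigma,\varepsilon}}$. This distribution is well defined because the class is finite, and the expectation is a finite average. Each infimum lies in $[0,2]$ because $|f_\theta f|\le 1$, so there are no integrability issues. Linearity then gives $1-\mathbb{E}_{\lambda}[\inf_{f_\theta}d_n(f_\theta,f)]$, which is $1-\tilde{e}_n(\mathcal{F}_\Theta,\bar{\mathcal{F}}_{\sigma,\varepsilon},d_n,S)$ by the definition of empirical approximation error.

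The only delicate point is the bookkeeping in the identification step. The objects in Lemma \ref{lemma:sigma_function} are functions on $\mathcal{X}_\varepsilon$, while Lemma \ref{lemma:inf_sup_correlation} requires its second argument to lie in $\bar{\mathcal{F}}_\sigma$, i.e.\ functions on $\mathcal{X}$. I will check explicitly that the lifted functions take values in $\{-1,1\}$ and that $d_n$ depends on them only through the values $f(x_i)$. Beyond this check the argument is a direct composition of the two lemmas.
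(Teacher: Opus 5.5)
Your proposal is correct and follows the paper's proof essentially step for step. You fix $\varepsilon<\min_{i\neq j}\|x_i-x_j\|_\infty$ (any partition when $n=1$), use Lemma \ref{lemma:sigma_function} to replace the Rademacher average by the uniform average over $\bar{\mathcal{F}}_{\sigma,\varepsilon}$, apply Lemma \ref{lemma:inf_sup_correlation} pointwise, and take the constant out by linearity; your checks on lifting to $\bar{\mathcal{F}}_\sigma$ and boundedness of the infima are bookkeeping the paper leaves implicit.
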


\begin{proof}
    The proof follows as a consequence of the two lemmas. Again setting $\varepsilon < \min_{i\neq j}||x_i-x_j||_\infty$ for $n\ge2$ (and using any partition for $n=1$):

    \begin{align*} 
        \tilde{R}_n(\mathcal{F}_\Theta, S)
        & = \mathbb{E}_\sigma \Bigg[ \underset{f_\theta \in \mathcal{F}_{\Theta}}{\sup}\ \frac{1}{n} \sum^n_{i=1} \sigma_i \cdot f_\theta(x_i) \Bigg]
        \\
        & = \mathbb{E}_{\lambda_{\bar{\mathcal{F}}_{\sigma, \varepsilon}}} \Bigg[ \underset{f_\theta \in \mathcal{F}_{\Theta}}{\sup}\ \frac{1}{n} \sum^n_{i=1} f(x_i) \cdot f_\theta(x_i) \Bigg] 
        \\
        & = \mathbb{E}_{\lambda_{\bar{\mathcal{F}}_{\sigma, \varepsilon}}} \Bigg[ 1 - \inf_{f_\theta \in \mathcal{F}_\Theta} d_n(f_\theta, f) \Bigg] 
        \\
        & = 1 - \mathbb{E}_{\lambda_{\bar{\mathcal{F}}_{\sigma, \varepsilon}}} \Bigg[ \inf_{f_\theta \in \mathcal{F}_\Theta} d_n(f_\theta, f) \Bigg]
        \\
        & = 1 - \tilde{e}_n(\mathcal{F}_\Theta, \bar{\mathcal{F}}_{\sigma, \varepsilon}, d_n, S)
    \end{align*}
\end{proof}

Proposition \ref{prop:equivalence_in_sample} establishes an equivalence between empirical Rademacher complexity and a sample analog of the numerator in restrictiveness.\footnote{Given Proposition \ref{prop:equivalence_in_sample}, we can also provide an axiomatic characterization of empirical Rademacher complexity in the spirit of Proposition 1 of \cite{fudenberg2023restrictiveness}, adapting their axioms to a finite environment and supplementing with axioms that pin down the necessary discrepancy function and eligible set. The results are shown in Appendix \ref{sec:axiomatization}.} It relies on a particular coarsening $\bar{\mathcal{F}}_{\sigma, \varepsilon}$ of the eligible set of functions. A version of Proposition \ref{prop:equivalence_in_sample} would also be true if the eligible set of functions were all functions that map from the sample to $\{-1,1\}$. However, restrictiveness is a population object, so to get equivalence between Rademacher complexity and the numerator of restrictiveness would require taking $n$ to $\infty$ - {which raises measurability concerns when $\mathcal X$ is uncountable: the unrestricted eligible set can contain nonmeasurable functions, for which population discrepancy is not defined.}

While \cite{fudenberg2023restrictiveness} do not encounter this measurability concern because they assume $\mathcal{X}$ is a finite set, we work with infinite $\mathcal{X}$ for two reasons. First, we need each element of the sample to be unique for Lemma \ref{lemma:sigma_function} to hold, which could not be true if $X$ can only take on finite values and we were to send $n$ to $\infty$. Second, we want to generalize the setting to infinite $\mathcal{X}$ for interpretability reasons. In many economically-relevant environments, we are interested in applying a model to a continuum of responses, which limits the applicability of statements on finite $\mathcal{X}$. For example, the analysis in Section 6 of \cite{fudenberg2023restrictiveness} describes the restrictiveness of various models of choice under risk, but only on a set of 25 binary lotteries. However, in many settings, the analyst's data is a sample of a continuous process, and their interests lie in extrapolating from current data to make broader statements about a model.
Note that when we are given the sample, we can always make the coarsening $\bar{\mathcal{F}}_{\sigma, \varepsilon}$ sufficiently fine to ensure that each observation $x_i$ is contained in its own partition of the coarsening. However, if samples are drawn continuously over an infinite domain, there is no such level of partitioning that guarantees such a result in general. As a result, when only the sample size is known, we show two results. First, the probability of a discrepancy between the Rademacher complexity and a sample analog of the numerator of restrictiveness can be made arbitrarily small.

\begin{lemma}\label{lemma:probably_the_same}
    Assume $d_n$ is the correlation discrepancy in Lemma \ref{lemma:inf_sup_correlation}. Given an atomless distribution $P_X$, a sample size $n$, and a tolerance level $\delta>0$, there exists an $\varepsilon_{n, \delta}>0$ such that
    $$\text{Pr}_{S \sim P_X^n}\left\{ \tilde{R}_n(\mathcal{F}_\Theta, S) \neq 1 - \tilde{e}_n\left(\mathcal{F}_\Theta, \bar{\mathcal{F}}_{\sigma, \varepsilon_{n, \delta}}, d_n, S \right) \right\} < \delta.$$
\end{lemma}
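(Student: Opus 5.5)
The plan is to reduce the statement to a probability bound on the event that two sample points fall into the same grid cell. The proof of Proposition \ref{prop:equivalence_in_sample} shows that the equality $\tilde{R}_n(\mathcal{F}_\Theta,S) = 1-\tilde{e}_n(\mathcal{F}_\Theta,\bar{\mathcal{F}}_{\sigma,\varepsilon},d_n,S)$ holds whenever every $x_i$ lies in its own cell of $\mathcal{X}_\varepsilon$. The two steps are Lemma \ref{lemma:sigma_function}, which is the only place distinctness of cells is used, and Lemma \ref{lemma:inf_sup_correlation}, which is purely algebraic. Given distinct cells, the uniform draw of $f\in\bar{\mathcal{F}}_{\sigma,\varepsilon}$ induces iid uniform signs $(f(x_1),\dots,f(x_n))$ on the sample, and this yields the equality. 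So the event in the statement is contained in the collision event
\[
C_\varepsilon=\{\exists\, i\neq j:\ x_i \text{ and } x_j \text{ lie in the same cell of } \mathcal{X}_\varepsilon\},
\]
and it suffices to find $\varepsilon$ with $\Pr(C_\varepsilon)<\delta$. For $n=1$ there is nothing to prove, so assume $n\ge 2$.

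By a union bound and independence,
\[
\Pr(C_\varepsilon)\le \binom{n}{2}\Pr(X_1,X_2 \text{ same cell})=\binom{n}{2}\sum_{A\in\mathcal{X}_\varepsilon}P_X(A)^2\le \binom{n}{2}\max_{A\in\mathcal{X}_\varepsilon}P_X(A),
\]
where the last step uses $\sum_A P_X(A)=1$. It therefore suffices to show $\max_{A\in\mathcal{X}_\varepsilon}P_X(A)\to 0$ as $\varepsilon\to 0$, and then to choose $\varepsilon_{n,\delta}$ small enough that this maximum is below $\delta/\binom{n}{2}$.

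The main obstacle is this last step, since the grid cells change with $\varepsilon$ and are not nested in general. The argument is by contradiction using compactness. Suppose there are $\eta>0$, $\varepsilon_k\to0$, and cells $A_k\in\mathcal{X}_{\varepsilon_k}$ with $P_X(A_k)\ge\eta$. Each $A_k$ has $\ell_\infty$-diameter at most $\varepsilon_k$. Pick $z_k\in A_k$; since $\mathcal{X}$ is closed and bounded, a subsequence converges to some $z\in\mathcal{X}$. For any $r>0$, eventually $A_k\subseteq B_\infty(z,r)$, so $P_X(B_\infty(z,r))\ge\eta$ for every $r$. Continuity from above then gives $P_X(\{z\})\ge\eta$, which contradicts atomlessness.

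Alternatively, one could avoid the grid by bounding $\Pr(\|X_1-X_2\|_\infty<\varepsilon)$, which tends to $\Pr(X_1=X_2)=0$ by continuity of measure and Fubini with atomlessness. A collision in a cell implies $\|x_i-x_j\|_\infty\le\varepsilon$, so this gives the same conclusion. I would likely present this cleaner version, with a nonstrict inequality handled by using $2\varepsilon$.
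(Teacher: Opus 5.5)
Your proposal is correct and follows the paper's proof: reduce the event to a cell collision, apply Boole's inequality to get $\binom{n}{2}\sum_{A\in\mathcal X_\varepsilon}P_X(A)^2\le\binom{n}{2}\max_A P_X(A)$, and use atomlessness to send this to zero; your ``cleaner version'' via $\Pr(\|X_1-X_2\|_\infty\le\varepsilon)\to\Pr(X_1=X_2)=0$ is the paper's final step. Your compactness argument for $\max_A P_X(A)\to0$ is also valid for that sub-step, using boundedness of $\mathcal X$ and continuity from above at a limit point in place of the diagonal-probability bound.
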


\begin{proof}
    Based on the previous results, setting a sufficiently low $\varepsilon$ will guarantee equality between Rademacher complexity and (one minus) the restrictiveness numerator. Hence, it suffices to find an $\varepsilon$ small enough to guarantee the equality at least $(1-\delta)$ of the time.

    Again, since the discrepancy (pun intended) stems primarily from situations where $x_i$ and $x_j$ are assigned the same bin, it becomes sufficient to examine the probability that such a collision occurs. Let $p_{max}=\max_{p \in P_{X_\varepsilon}}p$. Applying Boole's inequality,

    \begin{align*}
      \Pr_{S\sim P_X^n}(\text{collision})
      &\leq\sum_{i<j}\Pr(X_{i,\varepsilon}=X_{j,\varepsilon})\\
      &=\binom{n}{2}\sum_{A\in\mathcal X_\varepsilon}P_X(A)^2\\
      &\leq\binom{n}{2}p_{max}.
    \end{align*}
    Here $X_{i,\varepsilon}$ is the cell containing $X_i$. Note that $n$ is fixed. Moreover,
    \[
    p_{max}^2\leq\sum_{A\in\mathcal X_\varepsilon}P_X(A)^2
    \leq\Pr(\|X_1-X_2\|_\infty\leq\varepsilon)
    \longrightarrow\Pr(X_1=X_2)=0
    \]
    as $\varepsilon\downarrow0$, since $P_X$ is atomless. Thus a sufficiently small $\varepsilon$ makes the collision probability less than $\delta$. For $n=1$, no collision is possible and every partition gives the equality.
\end{proof}

Second, the value of this difference can also be made arbitrarily small.

\begin{proposition}\label{prop:bounded_difference}
    Assume $d_n$ is the correlation discrepancy in Lemma \ref{lemma:inf_sup_correlation}. Given an atomless distribution $P_X$, a sample size $n$, and a tolerance level $\delta > 0$, there exists an $\varepsilon_{n, \delta}>0$ such that $$\left| R_n(\mathcal{F}_\Theta) - \left[ 1 - e_n\left(\mathcal{F}_\Theta, \bar{\mathcal{F}}_{\sigma, \varepsilon_{n, \delta}}, d_n\right) \right] \right| < \delta.$$
\end{proposition}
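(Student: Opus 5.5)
The plan is to write the difference as an expectation over samples and split on the collision event controlled in Lemma \ref{lemma:probably_the_same}. Since $R_n(\mathcal{F}_\Theta)=\mathbb{E}_S[\tilde R_n(\mathcal{F}_\Theta,S)]$ and $e_n(\mathcal{F}_\Theta,\bar{\mathcal{F}}_{\sigma,\varepsilon},d_n)=\mathbb{E}_S[\tilde e_n(\mathcal{F}_\Theta,\bar{\mathcal{F}}_{\sigma,\varepsilon},d_n,S)]$, we have
\[
\left|R_n(\mathcal{F}_\Theta)-\left[1-e_n\right]\right|\le \mathbb{E}_S\left|\tilde R_n(\mathcal{F}_\Theta,S)-\left[1-\tilde e_n(\mathcal{F}_\Theta,\bar{\mathcal{F}}_{\sigma,\varepsilon},d_n,S)\right]\right|.
\]
The integrand vanishes off the collision event $C_\varepsilon$, meaning the event that two sample points fall into the same cell of $\mathcal X_\varepsilon$. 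This is because on the complement of $C_\varepsilon$ every $x_i$ lies in its own cell, and the argument of Lemma \ref{lemma:sigma_function} and Proposition \ref{prop:equivalence_in_sample} applies verbatim. Atomlessness also makes ties among the $x_i$ a null event, so uniqueness of covariates holds almost surely.

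On $C_\varepsilon$, I use a uniform bound rather than exact equality. Every $f_\theta$ takes values in $[-1,1]$, and both $\sigma_i$ and $f(x_i)$ lie in $\{-1,1\}$. So for every sign vector, $\sup_{f_\theta}\frac1n\sum_i\sigma_i f_\theta(x_i)\in[-1,1]$, and therefore $\tilde R_n\in[-1,1]$. Similarly, Lemma \ref{lemma:inf_sup_correlation} gives $1-\tilde e_n=\mathbb{E}_{\lambda}[\sup_{f_\theta}\frac1n\sum_i f(x_i)f_\theta(x_i)]\in[-1,1]$. Hence the integrand is at most $2$, and the expected difference is at most $2\Pr(C_\varepsilon)$.

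It then suffices to apply Lemma \ref{lemma:probably_the_same}, or more precisely its proof, with tolerance $\delta/2$. This yields $\varepsilon_{n,\delta}$ such that $\Pr(C_{\varepsilon_{n,\delta}})<\delta/2$, so the difference is strictly less than $\delta$. I phrase this in terms of $\Pr(C_\varepsilon)$ rather than the inequality event in the lemma's statement, because the bound must cover all samples where equality can fail, and that set is contained in $C_\varepsilon$ up to null sets. For $n=1$ the difference is exactly zero.

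The main obstacle I expect is measurability: the map $S\mapsto\tilde R_n(\mathcal F_\Theta,S)$ and the inner suprema must be measurable so that the expectations are defined. I would handle this with the maintained joint measurability of $f_\theta(x)$ in $(x,\theta)$ together with compactness of $\Theta$. If that is insufficient, I would fall back on outer expectations, since the bound by $2\Pr(C_\varepsilon)$ only uses an indicator of a measurable event and boundedness.
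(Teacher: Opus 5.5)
Your proof is correct and is essentially the paper's argument. Both proofs bound $|R_n-(1-e_n)|$ by $\mathbb{E}_S|\Delta(S)|$, observe that $\Delta(S)=0$ whenever the sample points occupy distinct cells, and then control the contribution of the collision event; the only difference is that the paper bounds $|\Delta(S)|$ by the envelope $1+\frac{1}{n}\sum_{i=1}^n G(x_i)$ and applies dominated convergence along $\varepsilon_\eta\downarrow 0$, whereas you use the constant bound $2$ with the collision estimate of Lemma \ref{lemma:probably_the_same} (which suffices as stated, since $|\Delta(S)|\le 2\,I\{\Delta(S)\neq 0\}$), mirroring the paper's proof of the offset analogue in Appendix \ref{A.3} and giving an explicit bound $2\Pr(C_\varepsilon)$ that does not depend on $\mathcal{F}_\Theta$.
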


\begin{proof}
    For each coarsening $\varepsilon$, define $\Delta(S) = \tilde{R}_n(\mathcal{F}_\Theta, S) - \left[1 - \tilde{e}_n(\mathcal{F}_\Theta, \bar{\mathcal{F}}_{\sigma, \varepsilon}, d_n, S)\right]$. Because $\mathcal{Y}$ is bounded and because $|g(x, \theta; f)| \leq G(x)$,
    $$\left|\Delta(S)\right| \leq 1 + \frac{1}{n}\sum_{i=1}^nG(x_i).$$

    Denote the right-hand side as $M(S)$. Define a sequence of $\varepsilon_1 > \varepsilon_2 > \dots 0$ such that $\lim_{\eta\rightarrow\infty}\varepsilon_\eta = 0$. Thus, for a given $\eta$,

    \begin{align*}
        \left| R_n(\mathcal{F}_\Theta) - \left[ 1 - e_n\left(\mathcal{F}_\Theta, \bar{\mathcal{F}}_{\sigma, \varepsilon_\eta}, d_n\right) \right] \right| & = \left| E_S\left[ \Delta(S) \right] \right| \\
        & \leq E_S\left[\left| \Delta(S) \right|\right] \\
        & \leq E_S\left[ I(\eta\text{-collision}) \cdot M(S) \right]\\
    \end{align*}

    Since $P_X$ is atomless, a fixed finite sample has distinct observations almost surely, and sufficiently small $\varepsilon_\eta$ separates them. Thus $I(\eta\text{-collision})\to0$ almost surely. By assumption on $G$, $E_S[M(S)]<\infty$, so dominated convergence makes the last expression tend to zero. A sufficiently large $\eta$ therefore gives the desired bound.
\end{proof}

Finally, in Lemma \ref{lemma:convergence} we show that the average approximation error is truly an empirical analog of approximation error. The proof of Lemma \ref{lemma:convergence} relies on two auxiliary lemmas, which we state in Appendix \ref{A.1}. Recall also our assumptions discussed in Section \ref{sec:notation}, namely that $g$ is bounded by an envelope function $G$ with finite mean, $g$ is continuous in $\theta$ with probability one, and $\Theta$ is compact.

\begin{lemma}\label{lemma:convergence}
    Fix a parametric model family $\mathcal{F}_\Theta$ and let $\mathcal{F}$ be a nonempty finite set of functions. Let $S=(x_1, \dots, x_n)$ be a random sample drawn iid from $P_X$, and let

    $$d_n(f_\theta, f) = \frac{1}{n}\sum_{i=1}^ng(x_i, \theta; f),$$

    Under the assumptions of Section \ref{sec:notation},
    \[
    E_{\lambda_{\mathcal F}}E_S
       \left[\inf_{\theta\in\Theta}d_n(f_\theta,f)\right]
    \longrightarrow
    E_{\lambda_{\mathcal F}}
       \left[\inf_{\theta\in\Theta}d(f_\theta,f)\right].
    \]

\end{lemma}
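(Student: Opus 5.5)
Since $\mathcal F$ is finite, $E_{\lambda_{\mathcal F}}$ is a finite average with weights $1/|\mathcal F|$. It therefore suffices to prove, for each fixed $f\in\mathcal F$ (or each $f$ outside a $\lambda_{\mathcal F}$-null set, which for finite $\mathcal F$ means each $f$ with positive weight satisfying assumption (ii)), that
\[
E_S\Big[\inf_{\theta\in\Theta}d_n(f_\theta,f)\Big]\longrightarrow \inf_{\theta\in\Theta}d(f_\theta,f).
\]
The result then follows by linearity. Fix such an $f$ and write $g_\theta(x)=g(x,\theta;f)$.

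The core step is a uniform law of large numbers over the compact parameter space. Since $g_\theta$ is continuous in $\theta$ almost surely, is dominated by the envelope $G$ with $E[G(X)]<\infty$, and $\Theta$ is a compact metric space, the classical argument (Jennrich / Newey--McFadden Lemma 2.4, presumably one of the auxiliary lemmas in Appendix \ref{A.1}) gives
\[
\sup_{\theta\in\Theta}\big|d_n(f_\theta,f)-d(f_\theta,f)\big|\to0 \quad\text{a.s.}
\]
I would sketch that argument as follows. For $\theta\in\Theta$ and radius $\rho$, set $\Delta_\rho(x,\theta)=\sup_{\theta'\in B(\theta,\rho)}|g_{\theta'}(x)-g_\theta(x)|$. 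This quantity is bounded by $2G(x)$ and tends to $0$ as $\rho\to0$ by continuity, so dominated convergence gives $E\Delta_\rho\to0$. Cover $\Theta$ by finitely many such balls and apply the strong LLN at the centers and to $\Delta_\rho$. Measurability of the supremum is handled by joint measurability together with separability of the compact metric $\Theta$, since the sup over $\Theta$ equals the sup over a countable dense subset by continuity. The same dominated-convergence step also shows that $\theta\mapsto d(f_\theta,f)$ is continuous, so its infimum is attained.

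Next, the elementary inequality $|\inf_\theta a_\theta-\inf_\theta b_\theta|\le\sup_\theta|a_\theta-b_\theta|$ transfers the uniform convergence to the infima, giving $\inf_\theta d_n(f_\theta,f)\to\inf_\theta d(f_\theta,f)$ almost surely. To pass to expectations over $S$, note the bound $|\inf_\theta d_n(f_\theta,f)|\le \frac1n\sum_i G(x_i)$. This bounding sequence is not fixed, but it converges in $L^1$ to $E[G]$ (it converges a.s. and has constant mean), so a generalized dominated convergence (Pratt's lemma) applies. Alternatively, one can use uniform integrability of $\frac1n\sum_i G(x_i)$, whose family is uniformly integrable as sample averages of an integrable variable. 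Either route yields $E_S[\inf_\theta d_n]\to\inf_\theta d$, and averaging over the finitely many $f$ gives the lemma.

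I expect the main obstacle to be the uniform LLN step, specifically handling continuity that holds only almost surely in $x$, and making the supremum measurable. The $L^1$ passage is a secondary subtlety, because the dominating function depends on $n$; I would rely on Pratt's lemma there rather than on plain dominated convergence.
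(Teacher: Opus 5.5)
Your proposal is correct and follows essentially the paper's route: a uniform law of large numbers over compact $\Theta$ (Lemma \ref{lemma:uniform}), the inf--sup inequality (Lemma \ref{lemma:m_estimator}), an integrability step using the envelope average $\bar G_n$, and averaging over the finitely many $f\in\mathcal F$. The only difference is the order of the last steps. The paper shows $E_S[\sup_\theta|d_n-d|]\to0$ from convergence in probability via an explicit truncation bound, then applies the inf--sup inequality inside the expectation; you instead obtain almost-sure convergence of the infima and invoke Pratt's lemma, which rests on the same uniform integrability of $\bar G_n$.
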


\begin{proof}
    Lemmas \ref{lemma:uniform} and \ref{lemma:m_estimator} give, for each $f\in\mathcal F$,
    \[
    E_S\left|\inf_{\theta\in\Theta}d_n(f_\theta,f)
              -\inf_{\theta\in\Theta}d(f_\theta,f)\right|
    \leq E_S\left[\sup_{\theta\in\Theta}|d_n(f_\theta,f)-d(f_\theta,f)|\right]
    \longrightarrow0.
    \]
    Since $\mathcal F$ is finite, averaging over $\lambda_{\mathcal F}$ and applying the triangle inequality proves the result.
\end{proof}

By itself, Lemma \ref{lemma:convergence} shows that restrictiveness is widely applicable in empirical work, since its estimation is robust to not knowing $P_X$ under reasonable assumptions. This complements the analysis in \cite{fudenberg2023restrictiveness} that samples from $\mathcal{F}$ instead of knowing $\lambda_{\mathcal{F}}$. Overall, our results show that for Rademacher complexity, there exists an analogous function class that fits within an empirical analogy of restrictiveness. Such a function class then can be tightly approximated by finite coarsenings whose average approximation error converges to approximation error in the limit. 

The argument for Lemma \ref{lemma:convergence} also provides a microfoundation for Proposition 2 of \cite{fudenberg2026model}, which interprets restrictiveness as the normalized limit of a noise-free average-case learning curve. In our setting, minimizing sample discrepancy yields population discrepancy converging in expectation to the smallest value attainable by the model, providing sufficient conditions for their risk-consistency assumption.

\subsection{Offset Rademacher complexity}

In this section, we show that similar results can be obtained with respect to \textit{offset} Rademacher complexity. We make two quick observations. First, the results show that
restrictiveness connects to more measures than solely Rademacher complexity. Furthermore, it provides alternative environments where the interpretation of restrictiveness can benefit from results on another measure (e.g., \citealp{liang2015learning}); in this case, offset processes enter bounds for square-loss learning under that paper's learning and geometric conditions.

Propositions \ref{prop:offset_equivalence_in_sample} and \ref{prop:offset_bounded_difference} provide analogous results regarding connections between (empirical) offset Rademacher complexity and (empirical) approximation error. The proofs are analogously similar, and are provided in Appendices \ref{A.2} and \ref{A.3} for completeness. The connection relies on two generalizations. First, because offset Rademacher complexity contains a free parameter $c\geq0$, the exact affine transformation between offset Rademacher complexity and approximation error depends on the parameter. Second, this parameter $c$ also informs the analogous eligible set of functions.

\begin{proposition}\label{prop:offset_equivalence_in_sample}
    Assume a discrepancy function $d_n(f, g; c) = \frac{c}{n}\sum_{i=1}^n\left[ f(x_i) - g(x_i) \right]^2.$ Given a sample of unique covariates $S = (x_1, \dots, x_n)$, there exists a function class $\frac{1}{2c}\bar{\mathcal{F}}_{\sigma, \varepsilon} := \left\{\frac{f}{2c}:f\in\bar{\mathcal F}_{\sigma,\varepsilon}\right\}$ such that \[ \tilde{R}_n^{\text{off}}(\mathcal{F}_\Theta, S, c) = \frac{1}{4c} - \tilde{e}_n\left(\mathcal{F}_\Theta, \frac{1}{2c}\bar{\mathcal{F}}_{\sigma, \varepsilon}, d_n, S\right) \]
    for all {$c\geq1/2$}.
\end{proposition}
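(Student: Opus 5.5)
The plan is to mirror the proof of Proposition \ref{prop:equivalence_in_sample}: first an exact algebraic identity for each fixed sign vector, the analogue of Lemma \ref{lemma:inf_sup_correlation}, and then the bijection between sign vectors and coarsened functions from Lemma \ref{lemma:sigma_function}.

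\emph{Step 1 (choice of partition and bijection).} Fix the unique sample $S$. As in Lemma \ref{lemma:sigma_function}, I will choose $\varepsilon<\min_{i\neq j}\|x_i-x_j\|_\infty$ when $n\ge 2$, and any partition when $n=1$. Then each $x_i$ lies in its own cell of $\mathcal X_\varepsilon$. Under the uniform distribution on the finite set $\bar{\mathcal F}_{\sigma,\varepsilon}$, the vector $(f(x_1),\dots,f(x_n))$ is then uniform on $\{-1,1\}^n$. The reason is that the cells not containing sample points contribute equal multiplicity to each sign pattern. Hence, for any function $\Phi$ of the sign vector, $\mathbb{E}_\sigma[\Phi(\sigma)]=\mathbb{E}_{\lambda_{\bar{\mathcal F}_{\sigma,\varepsilon}}}[\Phi(f(x_1),\dots,f(x_n))]$. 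Rescaling by $1/(2c)$ is a bijection from $\bar{\mathcal F}_{\sigma,\varepsilon}$ onto $\frac{1}{2c}\bar{\mathcal F}_{\sigma,\varepsilon}$, so it carries the uniform distribution on the first set to the uniform distribution on the second. The restriction $c\ge 1/2$ enters exactly here: it guarantees $\sigma/(2c)\in[-1,1]=\mathcal Y$. The rescaled class is therefore a legitimate subset of $\bar{\mathcal F}$, and $d_n$ is evaluated on functions into $\mathcal Y$.

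\emph{Step 2 (completing the square).} For a fixed $\sigma\in\{-1,1\}^n$ and the eligible function $h=\sigma/(2c)$, I will expand
\[
d_n(f_\theta,h;c)=\frac{c}{n}\sum_{i=1}^n\Big(f_\theta(x_i)-\frac{\sigma_i}{2c}\Big)^2=\frac{1}{n}\sum_{i=1}^n\big[c f_\theta(x_i)^2-\sigma_i f_\theta(x_i)\big]+\frac{1}{4c},
\]
using $\sigma_i^2=1$. Negating and taking the supremum over $\mathcal F_\Theta$, exactly as in the chain of equalities in Lemma \ref{lemma:inf_sup_correlation}, gives
\[
\sup_{f_\theta\in\mathcal F_\Theta}\frac{1}{n}\sum_{i=1}^n\big[\sigma_i f_\theta(x_i)-c f_\theta(x_i)^2\right]=\frac{1}{4c}-\inf_{f_\theta\in\mathcal F_\Theta}d_n(f_\theta,\sigma/(2c);c).
\]

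\emph{Step 3 (averaging).} I will take $\mathbb{E}_\sigma$ of both sides. By Step 1, this expectation can be rewritten as an expectation over $\lambda_{\frac{1}{2c}\bar{\mathcal F}_{\sigma,\varepsilon}}$, and linearity pulls out the constant $1/(4c)$. This gives $\tilde R^{\text{off}}_n(\mathcal F_\Theta,S,c)=\frac{1}{4c}-\tilde e_n\big(\mathcal F_\Theta,\frac{1}{2c}\bar{\mathcal F}_{\sigma,\varepsilon},d_n,S\big)$.

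The algebra is routine, so the only real obstacle is the bookkeeping in Step 1. It requires checking that the pushforward of the uniform measure on the rescaled coarsened class is uniform on sign patterns. It also requires checking that the condition $c\ge1/2$ keeps the eligible set inside $\bar{\mathcal F}$, so that $d_n$ is a valid discrepancy; coincidence is immediate from the squared form.
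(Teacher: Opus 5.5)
Your proposal is correct and follows essentially the same route as the paper: complete the square using $\sigma_i^2=1$, choose $\varepsilon<\min_{i\neq j}\|x_i-x_j\|_\infty$ so that Rademacher signs can be replaced by a uniform draw from $\frac{1}{2c}\bar{\mathcal F}_{\sigma,\varepsilon}$, with $c\geq 1/2$ keeping the eligible outcomes in $[-1,1]$, and then pull the constant $\frac{1}{4c}$ through the sup/inf identity. Your equal-multiplicity argument, showing that the uniform measure on the coarsened class induces the uniform distribution on sign patterns, spells out a step the paper treats as immediate.
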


\begin{proposition}\label{prop:offset_bounded_difference}
    Fix {$c\geq1/2$} and the discrepancy in Proposition \ref{prop:offset_equivalence_in_sample}. Given an atomless distribution $P_X$, a sample size $n$, and a tolerance level $\delta>0$, there exists an $\varepsilon_{n, \delta}>0$ such that
    $$\left| R_n^{\text{off}}(\mathcal{F}_\Theta, c) - \left[\frac{1}{4c} - e_n\left(\mathcal{F}_\Theta, \frac{1}{2c}\bar{\mathcal{F}}_{\sigma, \varepsilon_{n, \delta}}, d_n\right)\right] \right| < \delta.$$
\end{proposition}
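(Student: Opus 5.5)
The plan mirrors the proof of Proposition \ref{prop:bounded_difference}, with Proposition \ref{prop:offset_equivalence_in_sample} playing the role of Proposition \ref{prop:equivalence_in_sample}. For a fixed coarsening $\varepsilon$, define
\[
\Delta^{\text{off}}(S) = \tilde{R}_n^{\text{off}}(\mathcal{F}_\Theta, S, c) - \left[\frac{1}{4c} - \tilde{e}_n\left(\mathcal{F}_\Theta, \tfrac{1}{2c}\bar{\mathcal{F}}_{\sigma,\varepsilon}, d_n, S\right)\right].
\]
Taking $E_S$ and using Jensen gives
\[
\left|R_n^{\text{off}}(\mathcal{F}_\Theta,c) - \left[\tfrac{1}{4c} - e_n\right]\right| \le E_S\left|\Delta^{\text{off}}(S)\right|.
\]
It then suffices to show two things: $\Delta^{\text{off}}(S)=0$ whenever no two sample points share a cell of $\mathcal X_\varepsilon$, and $|\Delta^{\text{off}}(S)|$ is bounded by an integrable quantity.

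For the first point, I take the case of no collisions. Then the uniform distribution over $\frac{1}{2c}\bar{\mathcal F}_{\sigma,\varepsilon}$, restricted to the sample, induces exactly the uniform distribution of $(\sigma_1/2c,\dots,\sigma_n/2c)$ on $\{-1,1\}^n/2c$. This is the argument of Lemma \ref{lemma:sigma_function}. Proposition \ref{prop:offset_equivalence_in_sample} then applies to that sample and gives $\Delta^{\text{off}}(S)=0$. Its restriction $c\ge 1/2$ is exactly the hypothesis here, and it ensures the scaled eligible functions take values in $\mathcal Y$.

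For the bound, every $f_\theta$ takes values in $[-1,1]$. Hence the offset criterion lies in $[-1-c,\,1]$, so $|\tilde R^{\text{off}}_n|\le 1+c$. Likewise $0\le \tilde e_n \le c(1+\tfrac{1}{2c})^2$, since each squared difference is at most $(1+1/2c)^2$. This gives a deterministic constant $M$ with $|\Delta^{\text{off}}(S)|\le M$, so the envelope $G$ is not even needed. Then
\[
E_S\left|\Delta^{\text{off}}(S)\right| \le M\,\Pr(\text{collision}).
\]

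The last step is to make the collision probability small. I take a sequence $\varepsilon_\eta\downarrow 0$. Since $P_X$ is atomless, the sample points are almost surely distinct, and eventually they are separated, so the collision indicator tends to $0$ almost surely. Alternatively, one can invoke the Boole bound from the proof of Lemma \ref{lemma:probably_the_same}, namely $\Pr(\text{collision})\le\binom{n}{2}\Pr(\|X_1-X_2\|_\infty\le\varepsilon)\to 0$. Choosing $\varepsilon_{n,\delta}$ so that this probability is below $\delta/M$ finishes the proof.

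The main obstacle is the reduction step: checking that the in-sample identity of Proposition \ref{prop:offset_equivalence_in_sample} holds pointwise for every collision-free sample for a \emph{single} fixed $\varepsilon$. That proposition is stated as "there exists $\varepsilon$" given the sample. I would verify that its proof uses only the no-collision property, via the completing-the-square identity $\sigma f - cf^2 = \frac{1}{4c} - c(f-\sigma/2c)^2$ for $\sigma^2=1$, so that any separating $\varepsilon$ works. Measurability of $\Delta^{\text{off}}$ in $S$ follows from the paper's standing joint-measurability assumptions.
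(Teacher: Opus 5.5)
Your proposal is correct and follows the paper's proof in Appendix~\ref{A.3} essentially step for step: a deterministic bound on $|\Delta(S)|$, the fact that $\Delta(S)=0$ on collision-free samples via the completing-the-square identity, and the collision-probability estimate from Lemma~\ref{lemma:probably_the_same}. The only difference is cosmetic; your constant $M$ is cruder than the paper's $c+\frac{1}{4c}$, which comes from bounding both terms between $-c$ and $\frac{1}{4c}$, and this does not affect the argument.
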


The restriction $c\geq1/2$ ensures that the eligible outcomes $\pm1/(2c)$ belong to our common outcome space $[-1,1]$. For any fixed $c>0$, both propositions continue to hold with model predictions still in $[-1,1]$ if the outcome space is enlarged to $[-M_c,M_c]$, where $M_c=\max\{1,1/(2c)\}$. When $c=0$, offset complexity equals ordinary Rademacher complexity, and Propositions \ref{prop:equivalence_in_sample} and \ref{prop:bounded_difference} give the corresponding connection using correlation discrepancy.

\section{Restrictiveness and Generalization Error}\label{sec:generalization_error}

Rademacher complexity bounds generalization error, but its limiting value need not distinguish models that permit different behavior. For example, binary-choice models with finite VC dimension all have limiting Rademacher complexity zero, even when one model strictly extends another.\footnote{{Offset Rademacher complexity, connected to squared-error discrepancy by Propositions \ref{prop:offset_equivalence_in_sample} and \ref{prop:offset_bounded_difference}, can retain this limitation. For fixed $c\geq1/2$, if the model contains the zero function, then $0\leq R_n^{\mathrm{off}}(\mathcal F_\Theta,c)\leq R_n(\mathcal F_\Theta)$. For $\{-1,1\}$-valued models, $R_n^{\mathrm{off}}(\mathcal F_\Theta,c)=R_n(\mathcal F_\Theta)-c$. Thus models with vanishing Rademacher complexity share an offset limit of zero in the first case and $-c$ in the second.}} This limitation is economically relevant: in the finite-state setting with fixed utility studied by \cite{basu2020falsifiability}, both EUT and CEU have finite VC dimension and hence vanishing Rademacher complexity. The limiting measure therefore does not distinguish EUT from CEU, even though CEU accommodates the ambiguity-averse choices in the Ellsberg paradox that EUT rules out.

This limitation extends to other economically relevant environments as well. For example, in a two-good demand setting \citep{beigman2006learning}, Cobb--Douglas and the larger class of constant elasticity of substitution (CES) functions for relative demand both have limiting Rademacher complexity zero when substitution elasticities are positive and uniformly bounded above. Similarly, with uniformly bounded predictions, affine models and rectified linear unit (ReLU) neural networks of fixed architecture both have limiting Rademacher complexity zero, although the networks permit nonlinear, piecewise-affine responses \citep{farrell2021deep}.\footnote{These approaches extend VC dimension methods to real-valued classes using pseudodimension and covering numbers. Finite pseudodimension bounds sample covering numbers and, through Massart's lemma \citep{mitzenmacher2017probability}, gives upper bounds on Rademacher complexity that converge to zero as the number of observations grows.}

Choosing economically meaningful eligible functions, e.g. all CEU functions, addresses that limitation but also raises the question of whether a cardinal interpretation of restrictiveness remains. However, we show that restrictiveness provides useful cardinal meaning in economically relevant environments. The following result provides one set of sufficient conditions for such an interpretation.

We start by making additional assumptions on our environment. First, we assume that our input space $X = (Z, T) \in \mathcal{X}$, where $Z \in \mathbb{R}^{m-1}$ and $T \in \mathbb{R}$. We assume that, conditional on $Z$, all $f_\theta \in \mathcal{F}_\Theta$ and almost surely all eligible functions $f \sim \lambda_\mathcal{F}$ are {weakly nondecreasing in $T$}. This condition is commonly satisfied in economically relevant environments.\footnote{For example, consider the application of certainty equivalents from \cite{fudenberg2023restrictiveness} and \cite{fudenberg2026model}. Subjects are shown a binary lottery with payments $(\underline{z}, \bar{z})$ with probabilities $(1-p, p)$. The considered set of eligible functions satisfies monotonicity with respect to first order stochastic dominance, and the models considered are expected utility theory and non-expected utility theories, all of which also satisfy the same monotonicity condition. {For this result, take $T=p$ and $Z=(\underline z,\bar z)$, with $\underline z<\bar z$, and normalize a certainty equivalent $a$ to $2(a-\underline z)/(\bar z-\underline z)-1$. With continuous strictly increasing utility and increasing probability weighting, nonlinear EUT and cumulative prospect theory satisfy the required ordering in $p$.}} We further assume that $g(x, \theta; f) = \ell\left[ f_\theta(x), f(x) \right]$ for some {Borel measurable }pointwise $\ell: \mathcal{Y} \times \mathcal{Y} \rightarrow \mathbb{R}_+$. We assume the loss is convex in its first argument, and $\Delta(v) = \ell(1, v) - \ell(-1, v)$ is nonincreasing in $v$. Examples include squared error, mean absolute error, and Huber loss. Pinball loss is an asymmetric example {(\citealp[Chapter 1]{koenker2005quantile})}. We also strengthen the envelope assumption from Section \ref{sec:notation_and_objects} and assume that \[{\max\{|g(x,\theta;f)|,\ell(-1,f(x)),\ell(1,f(x))\}\le G(x),\qquad E_{P_X}[G(X)]<\infty}.\] {For almost every $f$ under $\lambda_{\mathcal F}$, the bound holds for all $\theta$ with probability one under $P_X$.}

Here, we define the \textit{expected uniform generalization error in discrepancy}  as \[{\operatorname{Gen}_n(\mathcal F_\Theta,\mathcal F,d_n)=\mathbb E_{\lambda_{\mathcal F}}\mathbb E_S\left[\sup_{f_\theta\in\mathcal F_\Theta}\{d(f_\theta,f)-d_n(f_\theta,f)\}\right]}.\] In words, this expression measures, across samples and eligible functions, the worst-case difference between a model's discrepancy to the population and its discrepancy to a given sample.\footnote{Note that for a fixed eligible function $f$, this is the sample error functional of \cite{mitzenmacher2017probability} applied to the set of possible discrepancies {$\{g(\cdot,\theta;f):\theta\in\Theta\}$}. This is a standard unit of analysis (see e.g., \citealp{bartlett2002rademacher}).}

{To relate this expected gap to average approximation error, we construct a common upper bound on population discrepancy.}
Let $Q_z$ denote the conditional distribution of $T$ given $Z=z$, and define \[ d_z(h,f)=\int\ell(h(z,t),f(z,t))\,Q_z(dt) \] for any measurable $[-1,1]$-valued prediction rule $h$. For each $f,z$, we compare the rules that predict a constant one or minus-one at all values of $T$. Their larger conditional population discrepancy determines

\begin{equation}\label{eq:endpt-benchmark}
    \begin{gathered}
        B_f(z)=\max\{d_z(-1,f),d_z(1,f)\},\qquad B(f)=\mathbb E_ZB_f(Z),\\
        D=\mathbb E_{f,Z}B_f(Z),\qquad r_n^{\mathrm{end}}=\frac{e_n}{D}.
    \end{gathered}
\end{equation}

The envelope gives $D\le\mathbb E G(X)<\infty$; like \cite{fudenberg2023restrictiveness}, we assume $D>0$ to define the normalized index.

\begin{proposition}\label{prop:endpt-bound}
    Under the stated conditions,
    \begin{equation}\label{eq:endpt-bound}
        0\le\operatorname{Gen}_n(\mathcal F_\Theta,\mathcal F,d_n)
        \le D-e_n(\mathcal F_\Theta,\mathcal F,d_n)
        =D\{1-r_n^{\mathrm{end}}(\mathcal F_\Theta,\mathcal F,d_n)\}.
    \end{equation}
    Moreover, $0\le e_n\le e\le D$, so $r_n^{\mathrm{end}}\in[0,1]$.
\end{proposition}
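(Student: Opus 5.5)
The plan is to show that every model prediction's population discrepancy is bounded above by the endpoint benchmark $B(f)$, uniformly in $\theta$. Once that bound is in hand, everything else is bookkeeping. The key inequality is
\[
d(f_\theta,f)\le B(f)\qquad\text{for all }\theta\in\Theta\text{ and }\lambda_{\mathcal F}\text{-a.e. } f .
\]
Given it, for each sample $S$ and eligible $f$,
\[
\sup_{\theta}\{d(f_\theta,f)-d_n(f_\theta,f)\}\le B(f)-\inf_\theta d_n(f_\theta,f).
\]
Taking $\mathbb E_S$ and then $\mathbb E_{\lambda_{\mathcal F}}$ gives $\operatorname{Gen}_n\le D-e_n$. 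The identity $D-e_n=D\{1-r_n^{\mathrm{end}}\}$ is just the definition in (\ref{eq:endpt-benchmark}). All these expectations are finite because of the strengthened envelope $G$.

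To prove the key inequality, I would fix $z$ and $f$ and work conditionally on $Z=z$. Convexity of $\ell$ in its first argument, with $u=f_\theta(z,t)\in[-1,1]$ written as the convex combination $u=\frac{1-u}{2}(-1)+\frac{1+u}{2}(1)$, gives
\[
\ell(u,v)\le \ell(-1,v)+w\,\Delta(v),\qquad w=\tfrac{1+u}{2}\in[0,1].
\]
Integrating against $Q_z$ then yields
\[
d_z(f_\theta,f)\le d_z(-1,f)+\int w(t)\,\Delta(f(z,t))\,Q_z(dt),
\]
where $w(t)=(1+f_\theta(z,t))/2$.

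Now monotonicity enters:
\begin{itemize}
\item $w$ is nondecreasing in $t$, because $f_\theta$ is nondecreasing in $T$ given $Z$.
\item $t\mapsto\Delta(f(z,t))$ is nonincreasing, because $f$ is nondecreasing in $T$ and $\Delta$ is nonincreasing.
\end{itemize}
Chebyshev's association (covariance) inequality for oppositely monotone functions under the single measure $Q_z$ therefore gives
\[
\int w\,\Delta\circ f\,dQ_z\le \bar w\int \Delta\circ f\,dQ_z=\bar w\,[d_z(1,f)-d_z(-1,f)],
\]
where $\bar w=\int w\,dQ_z\in[0,1]$. Hence
\[
d_z(f_\theta,f)\le (1-\bar w)\,d_z(-1,f)+\bar w\,d_z(1,f)\le B_f(z).
\]
Taking $\mathbb E_Z$ gives $d(f_\theta,f)\le B(f)$. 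I expect this Chebyshev step to be the main obstacle. Its justification requires three things: $\Delta\circ f(z,\cdot)$ must be $Q_z$-integrable, which follows from the envelope bound on $\ell(\pm1,f(x))$; the monotonicity must hold for $P_Z$-a.e.\ $z$; and the disintegration $P_X=Q_z\otimes P_Z$ must be valid. I would handle the inequality itself by the standard double-integral argument,
\[
\iint (w(t)-w(s))\bigl(\Delta(f(z,t))-\Delta(f(z,s))\bigr)\,Q_z(dt)\,Q_z(ds)\le 0,
\]
which needs no continuity assumptions.

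The remaining claims are short:
\begin{itemize}
\item $\operatorname{Gen}_n\ge 0$: fix any $\theta_0$. Then $\sup_\theta\{d-d_n\}\ge d(f_{\theta_0},f)-d_n(f_{\theta_0},f)$, and the right side has $\mathbb E_S$ equal to zero because $d_n$ is an unbiased sample mean.
\item $e_n\ge 0$: this holds since $\ell\ge 0$.
\item $e_n\le e$: we have $\mathbb E_S\inf_\theta d_n(f_\theta,f)\le \inf_\theta \mathbb E_S d_n(f_\theta,f)=\inf_\theta d(f_\theta,f)$, and then we average over $\lambda_{\mathcal F}$.
\item $e\le D$: this is immediate from the key inequality after averaging over $f$.
\item $r_n^{\mathrm{end}}\in[0,1]$: this follows from the two previous items together with $D>0$.
\end{itemize}
Measurability of the suprema and infima over $\Theta$ follows from the joint measurability assumptions together with continuity in $\theta$ and compactness of $\Theta$, which allow restriction to a countable dense subset of $\Theta$.
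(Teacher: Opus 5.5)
Your proposal is correct and follows the paper's proof essentially step for step: the convex-combination bound $\ell(u,v)\le\ell(-1,v)+\frac{1+u}{2}\Delta(v)$, the nonpositive covariance of the oppositely monotone functions under $Q_z$ via the independent-copy double integral, integration over $Z$ to get $d(f_\theta,f)\le B(f)$ uniformly in $\theta$, and the same short arguments for $\operatorname{Gen}_n\ge0$ and $0\le e_n\le e\le D$. One detail is worth stating explicitly: $\operatorname{Gen}_n$ and $e_n$ are defined with $\mathbb E_S$ and $\mathbb E_{\lambda_{\mathcal F}}$ in opposite orders, and Tonelli reconciles them given nonnegative losses and the joint measurability you already invoke.
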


The proof is in Appendix~{\ref{A.4}}. The result shows that restrictiveness produces a similar type of bound to Rademacher complexity in the appropriate environment. {This cardinal interpretation relies on precisely the choices that generalize Rademacher complexity to other measures within the restrictiveness family: the discrepancy function and the eligible set.} {Example~\ref{ex:cara-crra-short} illustrates this interpretation for a misspecified model of risk preferences.}

\begin{example}[Misspecification across stakes]\label{ex:cara-crra-short}
Consider the choice between two lotteries - one that pays $\$c$ for sure, and another that pays $\$3c$ with 50\% probability. Let the range be $\{-1, 1\}$ with $1$ corresponding to the sure lottery and $-1$ corresponding to the 50-50 gamble.
The value of $c$ is distributed according to $C\sim U[0.5, 1.5]$, which we take as our input $\mathcal{X}$. We consider a singleton eligible set defined by expected utility with constant absolute risk aversion (CARA) $u(x)=-e^{-ax}$, with $a=\log((1+\sqrt5)/2)$. This rule switches from the lottery to the sure payment at $c=1$.

We consider a parametrically misspecified model of expected utility with constant relative risk aversion (CRRA) $u(x)=\frac{x^{1-\rho}}{1-\rho}$, considering $\rho\in[0,1)$. Note in these problems, CRRA predicts constant behavior across $c$'s dependent on its parameter value: if $\rho \geq 1-\frac{\log(2)}{\log(3)}$, then the sure payoff is chosen (breaking ties in its favor); otherwise, the gamble is chosen. The two distinct prediction rules admit a finite discrete parameterization, satisfying the compactness and continuity assumptions. If we consider disagreement loss $\ell(y,v)=|y-v|/2$, both constant rules have population discrepancy $D=e=1/2$. If
$K\sim\mathrm{Bin}(n,1/2)$ sampled choices from our eligible function have $c\geq1$, then $\widetilde e_n=\min\{K/n,1-K/n\}$, giving the exact endpoint bound:
\[
 \operatorname{Gen}_n=D-e_n=e-e_n
 =\frac12(1-r_n^{\mathrm{end}})
 =\mathbb E\left|\frac Kn-\frac12\right|.
\]
In this simple example, the bound is satisfied with equality. Here generalization error is exactly the expected understatement of population misspecification from fitting the sample. Generalization error is $3.98$ percentage points of disagreement at $n=100$ and vanishes as $n$ grows.
\end{example}

We note that this connection can also apply to completeness \citep{fudenberg2022measuring}. This measure evaluates predictive performance in a given population. Viewing each eligible function as a noiseless population, the gap between empirical and population completeness is a normalized discrepancy gap. A simple corollary of Proposition~\ref{prop:endpt-bound} bounds the expected overstatement of completeness, weighted by baseline discrepancy. Suppose $d(f_{base},f)>0$ for $\lambda_{\mathcal F}$-almost every $f$, and define population and empirical completeness using this same population baseline:
\[
\begin{aligned}
\kappa(f)&=\frac{d(f_{base},f)-\inf_{\theta\in\Theta}d(f_\theta,f)}{d(f_{base},f)-d(f,f)}
=1-\frac{\inf_{\theta\in\Theta}d(f_\theta,f)}{d(f_{base},f)},\\[0.6em]
\widetilde\kappa_n(f,S)&=1-\frac{\inf_{\theta\in\Theta}d_n(f_\theta,f)}{d(f_{base},f)}.
\end{aligned}
\]
Then
\[
\begin{aligned}
0&\le \mathbb E_{\lambda_{\mathcal F}}\!\left[d(f_{base},f)\,\mathbb E_S\left[\widetilde\kappa_n(f,S)-\kappa(f)\right]\right]
=e-e_n\\
&\le \operatorname{Gen}_n\le D\{1-r_n^{\mathrm{end}}\}.
\end{aligned}
\]
The intermediate inequality follows because the difference of infima is bounded by the supremum of the difference.

\section{Conclusion}\label{sec:interpretation}

This paper connects restrictiveness to Rademacher complexity and its offset. For particular choices of the eligible set and discrepancy function, these measures are affine transformations of empirical approximation error, the sample counterpart of the restrictiveness numerator. We establish an exact connection when the sample is known and an arbitrarily close approximation in expectation when only its size is known. We also give conditions under which average approximation error consistently estimates the population numerator, extending the use of restrictiveness to settings where the covariate distribution is unknown. Using this connection, we can also easily augment the axioms of restrictiveness to provide an axiomatization of Rademacher complexity.

The connection highlights the economic importance of the choices available within the restrictiveness family. There exist economically relevant environments in which limiting Rademacher complexity does not distinguish models that permit different behavior. With an economically meaningful eligible set, however, restrictiveness can avoid this unidentifiability while maintaining a cardinal interpretation of bounding discrepancy-dependent generalization error.


Looking forward, our work shows a connection between one measure within economic theory and statistical learning theory, but there are likely further connections. Furthermore, we show only sufficient conditions for our generalization error bound; extending the result beyond the monotonicity conditions used here would allow this interpretation to apply to a wider range of economic environments. These directions would prove interesting for future work.

\newpage
\appendix
\setcounter{page}{1}
\renewcommand\thefigure{\thesection.\arabic{figure}}
\renewcommand\thetable{\thesection.\arabic{figure}}
\renewcommand\thelemma{\thesection.\arabic{lemma}}
\renewcommand\thecorollary{\thesection.\arabic{corollary}}

\section{Appendix: Supporting Results}

\subsection{Auxiliary Lemmas for Lemma \ref{lemma:convergence}}\label{A.1}

We use the usual completions of the probability measures in all expectation statements; this includes subsets of probability-zero events without changing their probabilities.

\begin{lemma}\label{lemma:uniform}
    Fix a parametric model family $\mathcal{F}_\Theta$ and fix $f \in \mathcal{F}$. Let $S=(x_1, \dots, x_n)$ be a random sample drawn iid from $P_X$. Let the empirical discrepancy be $d_n(f_\theta, f) = \frac{1}{n}\sum_{i=1}^ng(x_i, \theta; f)$.
    Assume $g(x,\theta;f)$ is jointly measurable and that:
    \begin{enumerate}
        \item There exists an envelope function $G(x)$ such that $|g(x, \theta; f)| \leq G(x)$ for all $\theta \in \Theta$ and $P_X$-almost every $x$ and $E[G(X)]<\infty$.
        \item $g(X, \theta; f)$ is continuous in $\theta$ with probability one, and $\Theta$ is a nonempty compact metric space.
    \end{enumerate}

    Then
    \[
    E_S\left[\sup_{\theta\in\Theta}
             |d_n(f_\theta,f)-d(f_\theta,f)|\right]\longrightarrow0.
    \]
\end{lemma}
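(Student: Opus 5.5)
This is the classical uniform law of large numbers for a compact parameter space (in the style of Jennrich / Newey--McFadden), upgraded from almost-sure convergence to $L^1$ convergence by dominated convergence. The plan is to prove almost-sure uniform convergence by a bracketing argument on $\Theta$, then pass to expectations.

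\textbf{Step 1: brackets around each parameter.} For $\theta\in\Theta$ and $\rho>0$ let $B(\theta,\rho)$ be the open ball and define
\[
\overline g_\rho(x,\theta)=\sup_{\theta'\in B(\theta,\rho)}g(x,\theta';f),\qquad \underline g_\rho(x,\theta)=\inf_{\theta'\in B(\theta,\rho)}g(x,\theta';f).
\]
Measurability is handled as follows. $\Theta$ is compact metric, hence separable. By continuity in $\theta$ for a.e.\ $x$, the sup and inf may be taken over a countable dense subset of the ball off a null set. This is where the completed measures mentioned at the start of A.1 are used. Both brackets are dominated by $G$. As $\rho\downarrow0$, continuity gives $\overline g_\rho-\underline g_\rho\to0$ for a.e.\ $x$. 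Dominated convergence then yields $E[\overline g_\rho(X,\theta)-\underline g_\rho(X,\theta)]\to0$, and also $E[\overline g_\rho(X,\theta)]-E g(X,\theta;f)\to0$ and similarly for the lower bracket.

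\textbf{Step 2: finite cover and SLLN.} Fix $\eta>0$. For each $\theta$ choose $\rho_\theta$ with $E[\overline g_{\rho_\theta}-\underline g_{\rho_\theta}]<\eta$. By compactness, extract a finite subcover $B(\theta_k,\rho_k)$, $k=1,\dots,K$. For $\theta'\in B(\theta_k,\rho_k)$ we have
\[
d_n(f_{\theta'},f)-d(f_{\theta'},f)\le \tfrac1n\textstyle\sum_i\overline g_{\rho_k}(x_i,\theta_k)-E\underline g_{\rho_k}(X,\theta_k),
\]
and symmetrically for the lower side. Applying the strong law to the $2K$ integrable variables shows that almost surely
\[
\limsup_n\sup_\theta|d_n-d|\le\max_k E[\overline g_{\rho_k}-\underline g_{\rho_k}]<\eta.
\]
Since $\eta$ is arbitrary, $\sup_\theta|d_n-d|\to0$ almost surely.

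\textbf{Step 3: from a.s.\ to $L^1$.} We have the bound
\[
\sup_\theta|d_n-d|\le \tfrac1n\textstyle\sum_iG(x_i)+EG(X).
\]
This dominating sequence is not fixed, but $\frac1n\sum_iG(x_i)$ converges in $L^1$ by the SLLN together with uniform integrability of sample means of an integrable variable. Equivalently, Scheffé's lemma or generalized dominated convergence (Pratt's lemma) gives $E_S\sup_\theta|d_n-d|\to0$.

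\textbf{Main obstacle.} The delicate point is Step 1's measurability and a.s.\ continuity: the null set on which continuity fails must not depend on $\theta$. The assumption is continuity in $\theta$ with probability one, so a single null set works. The other care point is the generalized dominated convergence in Step 3, since no fixed integrable dominating function is available.
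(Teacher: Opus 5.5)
Your proof is correct, but it takes a different route from the paper's. The paper does not prove the uniform law itself. It reduces suprema to a countable dense $\Theta_0$ for measurability, cites Theorem 18.2 of Hansen (2022) to get $U_n=\sup_\theta|d_n-d|\to0$ \emph{in probability}, and then upgrades to $L^1$ by hand. Splitting on $\{U_n>\eta\}$ and using $0\le U_n\le\bar G_n+\mu$, it obtains $E_S[U_n]\le\eta+2\mu\Pr(U_n>\eta)+E_S|\bar G_n-\mu|$, and it handles the last term with the $L^1$ law of large numbers for the envelope.

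You instead reprove the uniform law from scratch with the Jennrich/Tauchen bracketing cover, which gives the stronger almost-sure uniform convergence, and then apply Pratt's generalized dominated convergence. The paper's truncation step is essentially an inline proof of that same lemma in its convergence-in-probability form. Your version is self-contained and shows exactly where compactness, a.s.\ continuity and the envelope enter; the bracket bounds also give continuity of $d$ in $\theta$ for free. The cost is length, whereas the paper is shorter because it delegates the core step.

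Three small points to tidy:
\begin{itemize}
\item Almost-sure statements require realizing all samples as initial segments of one iid sequence. This is harmless, since $E_S[U_n]$ depends only on the law of $S_n$.
\item Step 3 needs measurability of $U_n$ itself, not just of the brackets. This follows from the same countable-dense-subset argument once $d$ is known to be continuous.
\item In Pratt's lemma the condition $E V_n\to EV$ is immediate here, because $E[\bar G_n+\mu]=2\mu$ for every $n$. No uniform-integrability argument for the sample means is needed.
\end{itemize}
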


\begin{proof}
    {Choose a countable dense subset $\Theta_0\subseteq\Theta$. Parameter continuity makes $d_n(f_\theta,f)$ continuous in $\theta$ with probability one, while the envelope and dominated convergence give continuity of $d(f_\theta,f)$. Their infima and the suprema of $d-d_n$ and $|d-d_n|$ can therefore be taken over $\Theta_0$, giving measurable versions of these quantities.}

    Let $U_n=\sup_{\theta\in\Theta}|d_n(f_\theta,f)-d(f_\theta,f)|$. Write $\bar G_n=n^{-1}\sum_{i=1}^nG(x_i)$ and $\mu=E_{P_X}[G(X)]$. By Theorem 18.2 of \cite{hansen2022probability}, $U_n\xrightarrow{P}0$. The envelope gives $0\leq U_n\leq\bar G_n+\mu$. Thus, for any $\eta>0$,
    \begin{align*}
    E_S[U_n]
    &\leq\eta+E_S\left[(\bar G_n+\mu)I(U_n>\eta)\right]\\
    &\leq\eta+2\mu\Pr_{S\sim P_X^n}(U_n>\eta)
             +E_S\left[|\bar G_n-\mu|\right].
    \end{align*}
    The probability term converges to zero, and the last term converges to zero by the $L^1$ law of large numbers for the integrable envelope. Hence $\limsup_{n\to\infty}E_S[U_n]\leq\eta$. Since $\eta>0$ is arbitrary, the result follows.
\end{proof}

This encompasses multiple types of loss functions. For example, bounded outcomes give constant envelopes for squared and absolute error.

\begin{lemma}\label{lemma:m_estimator}
    Fix a parametric model family $\mathcal{F}_\Theta$ and fix $f \in \mathcal{F}$. Let $S=(x_1, \dots, x_n)$ be a random sample drawn iid from $P_X$.
    Whenever the infima below are finite,
    \[
    \left|\inf_{\theta\in\Theta}d_n(f_\theta,f)
              -\inf_{\theta\in\Theta}d(f_\theta,f)\right|
    \leq\sup_{\theta\in\Theta}|d_n(f_\theta,f)-d(f_\theta,f)|.
    \]
\end{lemma}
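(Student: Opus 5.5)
This is the standard fact that the infimum is $1$-Lipschitz with respect to the sup-norm. The plan is to compare the two functions $\theta\mapsto d_n(f_\theta,f)$ and $\theta\mapsto d(f_\theta,f)$ pointwise and then pass to infima; no probabilistic argument is needed, since the inequality holds sample by sample. Write $U=\sup_{\theta\in\Theta}|d_n(f_\theta,f)-d(f_\theta,f)|$. If $U=\infty$ there is nothing to prove, so assume $U<\infty$.

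For every $\theta\in\Theta$ the definition of $U$ gives
\[
\inf_{\theta'\in\Theta}d_n(f_{\theta'},f)\le d_n(f_\theta,f)\le d(f_\theta,f)+U .
\]
Taking the infimum over $\theta$ on the right yields
\[
\inf_{\theta}d_n(f_\theta,f)\le\inf_{\theta}d(f_\theta,f)+U .
\]
Swapping the roles of $d_n$ and $d$ gives the symmetric inequality
\[
\inf_{\theta}d(f_\theta,f)\le\inf_{\theta}d_n(f_\theta,f)+U .
\]

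Both infima are finite by hypothesis, so the two inequalities can be subtracted and combined, giving
\[
\left|\inf_\theta d_n(f_\theta,f)-\inf_\theta d(f_\theta,f)\right|\le U.
\]
This is exactly where the finiteness hypothesis enters: it rules out $\infty-\infty$ expressions. Under the envelope assumption it holds automatically, since both infima are bounded in absolute value by $\bar G_n$ and $E[G(X)]$, respectively.

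There is no real obstacle here. The only care needed is measurability, which is not part of the deterministic claim: when this bound is later used inside $E_S$, as in Lemma \ref{lemma:convergence}, one relies on the countable dense $\Theta_0$ device from the proof of Lemma \ref{lemma:uniform} to treat both sides as measurable.
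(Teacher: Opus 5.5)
Your proposal is correct and follows essentially the same argument as the paper: set $U$ equal to the supremum, use the pointwise sandwich $d(f_\theta,f)-U\le d_n(f_\theta,f)\le d(f_\theta,f)+U$, take infima, and note that $U=\infty$ is trivial. Your added remarks on where finiteness enters and on measurability via $\Theta_0$ are accurate, but they are not needed for this deterministic claim.
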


\begin{proof}
    Write $U=\sup_{\theta\in\Theta}|d_n(f_\theta,f)-d(f_\theta,f)|$. If $U$ is finite, then
    $d(f_\theta,f)-U\leq d_n(f_\theta,f)\leq d(f_\theta,f)+U$ for every $\theta$. Taking infima proves both inequalities. If $U$ is infinite, the result is immediate.
\end{proof}

\subsection{Proof of Proposition \ref{prop:offset_equivalence_in_sample}}\label{A.2}

First, note that

\begin{align*}
    \frac{\sum_{i=1}^n[\sigma_if(x_i)-cf(x_i)^2]}{n} & = \frac{1}{4c}-\frac{c}{n}\sum_{i=1}^n\left[ f(x_i) - \frac{\sigma_i}{2c} \right]^2
\end{align*}

This can be shown by completing the square for each term noting that $\sigma_i^2=1$. Thus, empirical offset Rademacher complexity can be shown to equal

\begin{align*}
    \tilde{R}^{\text{off}}_n(\mathcal{F}_{\Theta},S,c) & = \mathbb{E}_\sigma \left[ \underset{f_\theta \in \mathcal{F}_{\Theta}}{\text{sup}}\ \frac{1}{n} \sum^n_{i=1} [\sigma_i \cdot f_\theta(x_i) - c f_\theta(x_i)^2] \right] \\
    & = \mathbb{E}_\sigma\left\{ \underset{f_\theta \in \mathcal{F}_{\Theta}}{\text{sup}} \frac{1}{4c} - \frac{c}{n}\sum_{i=1}^n\left[ f_\theta(x_i) - \frac{1}{2c}\sigma_i \right]^2\right\}
\end{align*}

\begin{lemma}\label{lemma:offset_sigma_function}
    Given a sample of unique covariates $S = (x_1, \dots, x_n)$, there exists a function class $\bar{\mathcal{F}}_{\sigma, \varepsilon}$ such that
    \begin{align*}
    & \mathbb{E}_\sigma \left[ \underset{f_\theta \in \mathcal{F}_{\Theta}}{\text{sup}}\ \frac{1}{n} \sum^n_{i=1} [\sigma_i \cdot f_\theta(x_i) - c f_\theta(x_i)^2] \right] \\
    & = \mathbb{E}_{f \sim \lambda_{\bar{\mathcal{F}}_{\sigma, \varepsilon}}} \left[ \underset{f_\theta \in \mathcal{F}_{\Theta}}{\text{sup}}\ \frac{1}{n} \sum^n_{i=1} [f(x_i) \cdot f_\theta(x_i) - c f_\theta(x_i)^2] \right]
    \end{align*}
    for all $c>0$.
\end{lemma}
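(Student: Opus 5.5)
The plan is to copy the proof of Lemma \ref{lemma:sigma_function}. The offset term $-c f_\theta(x_i)^2$ depends only on the model function and on the sample, never on the sign vector, so it rides along unchanged. The whole argument therefore reduces to identifying the uniform distribution over $\bar{\mathcal{F}}_{\sigma,\varepsilon}$, evaluated on the sample, with the law of the Rademacher vector $\sigma$.

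The construction comes first. For $n=1$, any grid will do. For $n\ge2$, choose $\varepsilon<\min_{i\neq j}\|x_i-x_j\|_\infty$; this is positive because the covariates are unique. Two points in the same grid cell differ by less than $\varepsilon$ in every coordinate, so each $x_i$ then lies in a distinct cell $A_i\in\mathcal X_\varepsilon$. The set $\mathcal X_\varepsilon$ is finite because $\mathcal X$ is bounded, so $\bar{\mathcal F}_{\sigma,\varepsilon}$ is finite and $\lambda_{\bar{\mathcal F}_{\sigma,\varepsilon}}$ is uniform on it. Under this uniform measure, the values $(f(A_1),\dots,f(A_n))$ are i.i.d. uniform on $\{-1,1\}$. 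To see this, note that each sign pattern on the $n$ distinct cells is extended by exactly $2^{|\mathcal X_\varepsilon|-n}$ functions. Writing $f(x_i)=f(A_i)$ for the induced function on $\mathcal X$, the vector $(f(x_1),\dots,f(x_n))$ has exactly the distribution of $\sigma$.

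Next, define $\Phi(v)=\sup_{f_\theta\in\mathcal F_\Theta}\frac1n\sum_{i=1}^n[v_i f_\theta(x_i)-c f_\theta(x_i)^2]$ for $v\in\{-1,1\}^n$. This is finite because $|v_if_\theta(x_i)-cf_\theta(x_i)^2|\le 1+c$. The left-hand side of the lemma is $\mathbb E_\sigma\Phi(\sigma)$. The right-hand side is $\mathbb E_{\lambda}\Phi(f(x_1),\dots,f(x_n))$, because for each eligible $f$ the integrand depends on $f$ only through its values on the sample. Both expectations are finite averages of $\Phi$ over the $2^n$ sign patterns with equal weights, so they coincide. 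Measurability does not arise, since both measures are discrete. The argument never used the value of $c$, so the identity holds for every $c>0$, and indeed for $c\ge0$.

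I do not expect a real obstacle here. The only point needing care is checking that the grid cells genuinely separate the sample points, which is the $\|\cdot\|_\infty$ argument above. I would also state explicitly that the supremum inside is a deterministic function of the sign pattern, so the equality-in-law argument applies even when $\mathcal F_\Theta$ is infinite.
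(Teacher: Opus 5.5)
Your proposal is correct and follows the paper's proof, which simply reuses the argument of Lemma~\ref{lemma:sigma_function}: choosing $\varepsilon<\min_{i\neq j}\|x_i-x_j\|_\infty$ (any grid for $n=1$) puts each sample point in its own cell, so uniform draws from $\bar{\mathcal F}_{\sigma,\varepsilon}$ induce i.i.d.\ Rademacher signs on the sample while the offset term $-c f_\theta(x_i)^2$ is unaffected. You additionally spell out what the paper leaves implicit: why grid cells separate the points, the $2^{|\mathcal X_\varepsilon|-n}$ counting argument for uniformity, and finiteness of the supremum.
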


\begin{proof}
    The proof is the same as Lemma \ref{lemma:sigma_function}; choosing $\varepsilon < \min_{i\neq j}||x_i-x_j||_\infty$ for $n\geq2$ (and any partition for $n=1$) guarantees that the interval coarsening will be fine enough to place each $x_i$ into a unique partition.
\end{proof}

For its convenience later, we can instead write a version of Lemma \ref{lemma:offset_sigma_function} that takes the uniform distribution over $\frac{1}{2c}\bar{\mathcal{F}}_{\sigma, \varepsilon}:= \left\{\frac{f}{2c}:f\in\bar{\mathcal F}_{\sigma,\varepsilon}\right\}$.

\begin{corollary}\label{corollary:offset_sigma_function}
    Given a sample of unique covariates $S = (x_1, \dots, x_n)$, there exists a function class $\bar{\mathcal{F}}_{\sigma, \varepsilon}$ such that
    \begin{align*}
    & \mathbb{E}_\sigma \left[ \underset{f_\theta \in \mathcal{F}_{\Theta}}{\text{sup}}\ \frac{1}{n} \sum^n_{i=1} [\sigma_i \cdot f_\theta(x_i) - c f_\theta(x_i)^2] \right] \\
    & = \mathbb{E}_{f \sim \lambda_{\frac{1}{2c}\bar{\mathcal{F}}_{\sigma, \varepsilon}}} \left[ \underset{f_\theta \in \mathcal{F}_{\Theta}}{\text{sup}}\ \frac{1}{4c} - \frac{c}{n}\sum_{i=1}^n\left[ f_\theta(x_i) - f(x_i) \right]^2 \right]
    \end{align*}
    for all $c\geq1/2$.
\end{corollary}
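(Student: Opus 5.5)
The corollary is a direct rescaling of Lemma \ref{lemma:offset_sigma_function} combined with the completing-the-square identity at the start of Appendix \ref{A.2}. I would first fix $\varepsilon<\min_{i\neq j}\|x_i-x_j\|_\infty$ for $n\ge2$ (any partition if $n=1$), exactly as in Lemma \ref{lemma:sigma_function}. This puts each $x_i$ in its own cell of $\mathcal X_\varepsilon$, so Lemma \ref{lemma:offset_sigma_function} applies. It rewrites the left-hand side as an expectation over $f\sim\lambda_{\bar{\mathcal F}_{\sigma,\varepsilon}}$ of
\[
\sup_{f_\theta}\frac1n\sum_{i=1}^n\bigl[f(x_i)f_\theta(x_i)-c f_\theta(x_i)^2\bigr].
\]

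Next, for each fixed $f\in\bar{\mathcal F}_{\sigma,\varepsilon}$ I would apply the identity
\[
\frac1n\sum_{i=1}^n\bigl[s_i f_\theta(x_i)-c f_\theta(x_i)^2\bigr]=\frac1{4c}-\frac cn\sum_{i=1}^n\Bigl[f_\theta(x_i)-\frac{s_i}{2c}\Bigr]^2 ,
\]
with $s_i=f(x_i)\in\{-1,1\}$. This is valid because the identity only uses $s_i^2=1$. The identity holds pointwise in $\theta$, so it passes through the supremum. Hence the integrand becomes
\[
\sup_{f_\theta}\Bigl\{\frac1{4c}-\frac cn\sum_{i=1}^n\bigl[f_\theta(x_i)-\tilde f(x_i)\bigr]^2\Bigr\}, \qquad \tilde f=f/(2c).
\]

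Finally, I would change variables in the expectation. The map $f\mapsto f/(2c)$ is a bijection from the finite set $\bar{\mathcal F}_{\sigma,\varepsilon}$ onto $\frac1{2c}\bar{\mathcal F}_{\sigma,\varepsilon}$, since $c>0$. A bijection of finite sets pushes the uniform distribution forward to the uniform distribution, so $\mathbb E_{f\sim\lambda_{\bar{\mathcal F}_{\sigma,\varepsilon}}}[\Phi(f/2c)]=\mathbb E_{\tilde f\sim\lambda_{\frac1{2c}\bar{\mathcal F}_{\sigma,\varepsilon}}}[\Phi(\tilde f)]$. This yields the stated right-hand side.

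The restriction $c\ge1/2$ plays no role in the algebra. It only ensures that $\pm1/(2c)\in[-1,1]=\mathcal Y$, so that the rescaled class is a legitimate subset of $\bar{\mathcal F}_\varepsilon$ and the uniform distribution over it falls under the paper's framework. I would note this explicitly. The only step needing any care is this bookkeeping: checking that the bijection preserves uniformity and that the identity is applied with $s_i$ taken from the original $\pm1$ function rather than the rescaled one. I do not expect a genuine obstacle.
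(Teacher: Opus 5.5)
Your proposal is correct and follows essentially the paper's own route. You invoke Lemma~\ref{lemma:offset_sigma_function} with $\varepsilon<\min_{i\neq j}\|x_i-x_j\|_\infty$, apply the completing-the-square identity from Appendix~\ref{A.2} (which needs only $f(x_i)^2=1$), and reindex the uniform average through the bijection $f\mapsto f/(2c)$; your remark that $c\ge 1/2$ matters only for keeping $\pm 1/(2c)$ inside $\mathcal Y$ matches the paper's comment after Proposition~\ref{prop:offset_bounded_difference}.
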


\begin{lemma}\label{lemma:inf_sup_squared_error}
    Fix a sample $S=(x_1, \dots, x_n)$, a mapping $g \in \frac{1}{2c}\bar{\mathcal{F}}_{\sigma, \varepsilon}$, and assume a discrepancy function of the form $d_n: \bar{\mathcal{F}} \times \frac{1}{2c}\bar{\mathcal{F}}_{\sigma, \varepsilon} \rightarrow \mathbb{R}_+$ such that
    $$d_n(f, g; c) = \frac{c}{n}\sum_{i=1}^n\left[ f(x_i) - g(x_i) \right]^2.$$
    Then, $$ \sup_{f_\theta \in \mathcal{F}_\Theta} \frac{1}{4c} - \frac{c}{n}\sum_{i=1}^n\left[ f_\theta(x_i) - g(x_i)\right]^2 = \frac{1}{4c} - \inf_{f_\theta \in \mathcal{F}_\Theta} d_n(f_\theta, g; c)$$
    for all $c\geq1/2$.
\end{lemma}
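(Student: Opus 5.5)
This identity is purely algebraic, so the plan mirrors the proof of Lemma \ref{lemma:inf_sup_correlation}. The sample $S$, the target $g\in\frac{1}{2c}\bar{\mathcal F}_{\sigma,\varepsilon}$ and $c$ are all fixed, so the term $\frac{1}{4c}$ is a constant that does not depend on $f_\theta$. I will first rewrite the objective inside the supremum as $\frac{1}{4c}+\bigl(-d_n(f_\theta,g;c)\bigr)$, using the definition $d_n(f,g;c)=\frac{c}{n}\sum_{i=1}^n[f(x_i)-g(x_i)]^2$ with $f=f_\theta$.

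Next I will use two elementary facts about suprema over the fixed index set $\mathcal F_\Theta$. First, for any constant $a$, $\sup_{f_\theta}(a+h(f_\theta))=a+\sup_{f_\theta}h(f_\theta)$. Second, $\sup_{f_\theta}(-h(f_\theta))=-\inf_{f_\theta}h(f_\theta)$. Applied with $a=\frac{1}{4c}$ and $h=d_n(\cdot,g;c)$, these give
\[
\sup_{f_\theta\in\mathcal F_\Theta}\Bigl\{\tfrac{1}{4c}-d_n(f_\theta,g;c)\Bigr\}=\tfrac{1}{4c}-\inf_{f_\theta\in\mathcal F_\Theta}d_n(f_\theta,g;c),
\]
which is the claim.

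The only point needing care is well-definedness. Both $f_\theta$ and $g$ take values in $[-1,1]$. The restriction $c\ge 1/2$ is what guarantees $\pm\frac{1}{2c}\in[-1,1]$, so that $g$ lies in the outcome space. Hence $d_n(\cdot,g;c)$ is bounded: it is nonnegative and at most $4c$. The infimum is therefore finite and the sup/inf manipulation involves no $\infty-\infty$ issues.

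There is no genuine obstacle here. This lemma, together with Corollary \ref{corollary:offset_sigma_function}, then yields Proposition \ref{prop:offset_equivalence_in_sample}. One takes expectations over $\lambda_{\frac{1}{2c}\bar{\mathcal F}_{\sigma,\varepsilon}}$ and pulls out the constant $\frac{1}{4c}$, exactly as in the proof of Proposition \ref{prop:equivalence_in_sample}.
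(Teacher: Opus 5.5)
Your proposal is correct and matches the paper's argument. The paper gives no separate proof of this lemma; it treats it as the direct analogue of Lemma \ref{lemma:inf_sup_correlation}, pulling the constant $\frac{1}{4c}$ out of the supremum and using $\sup(-h)=-\inf h$, exactly as you do. Your well-definedness remark is harmless but not needed: since $d_n\ge 0$ and $\mathcal F_\Theta$ is nonempty, the infimum is already a finite real number for any $c>0$.
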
 

Finally, the proof of Proposition \ref{prop:offset_equivalence_in_sample} follows as a consequence of the two lemmas. Again setting $\varepsilon < \min_{i\neq j}||x_i-x_j||_\infty$ when $n\geq2$ (and using any partition when $n=1$):

\begin{align*}
    \tilde{R}^{\text{off}}_n(\mathcal{F}_{\Theta},S,c) & = \mathbb{E}_\sigma \left[ \underset{f_\theta \in \mathcal{F}_{\Theta}}{\text{sup}}\ \frac{1}{n} \sum^n_{i=1} [\sigma_i \cdot f_\theta(x_i) - c f_\theta(x_i)^2] \right] \\
    & = \mathbb{E}_{f \sim \lambda_{\frac{1}{2c}\bar{\mathcal{F}}_{\sigma, \varepsilon}}}\left\{ \underset{f_\theta \in \mathcal{F}_{\Theta}}{\text{sup}} \frac{1}{4c} - \frac{c}{n}\sum_{i=1}^n\left[ f_\theta(x_i) - f(x_i)\right]^2\right\} \\
    & = \frac{1}{4c} - \mathbb{E}_{f \sim \lambda_{\frac{1}{2c}\bar{\mathcal{F}}_{\sigma, \varepsilon}}}\left[\inf_{f_\theta \in \mathcal{F}_\Theta} d_n(f_\theta, f; c)\right] \\
    & = \frac{1}{4c} - \tilde{e}_n\left(\mathcal{F}_\Theta, \frac{1}{2c}\bar{\mathcal{F}}_{\sigma, \varepsilon}, d_n, S\right)
\end{align*}

\subsection{Proof of Proposition \ref{prop:offset_bounded_difference}}\label{A.3}

\begin{lemma}\label{lemma:offset_probably_the_same}
    Fix $c\geq1/2$ and the discrepancy in Proposition \ref{prop:offset_equivalence_in_sample}. Given an atomless distribution $P_X$, a sample size $n$, and a tolerance level $\delta>0$, there exists an $\varepsilon_{n, \delta}>0$ such that
    $$\text{Pr}_{S \sim P_X^n}\left\{ \tilde{R}_n^{\text{off}}(\mathcal{F}_\Theta, S, c) \neq \frac{1}{4c} - \tilde{e}_n\left(\mathcal{F}_\Theta, \frac{1}{2c}\bar{\mathcal{F}}_{\sigma, \varepsilon_{n,\delta}}, d_n, S\right) \right\} < \delta.$$
\end{lemma}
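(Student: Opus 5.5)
The plan is to mirror the proof of Lemma \ref{lemma:probably_the_same}, replacing the correlation identity with the offset identity of Proposition \ref{prop:offset_equivalence_in_sample}. That proposition, built from Corollary \ref{corollary:offset_sigma_function} and Lemma \ref{lemma:inf_sup_squared_error}, shows the following. Whenever the sample points $x_1,\dots,x_n$ lie in pairwise distinct cells of $\mathcal X_\varepsilon$, the equality
\[
\tilde{R}_n^{\text{off}}(\mathcal{F}_\Theta, S, c) = \frac{1}{4c} - \tilde{e}_n\Big(\mathcal{F}_\Theta, \tfrac{1}{2c}\bar{\mathcal{F}}_{\sigma, \varepsilon}, d_n, S\Big)
\]
holds for every $c\geq 1/2$. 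The point to check is that the argument uses only the separation of the sample across cells, not the specific choice $\varepsilon<\min_{i\neq j}\|x_i-x_j\|_\infty$. Uniform $f\in\bar{\mathcal F}_{\sigma,\varepsilon}$ then induces i.i.d.\ Rademacher values $(f(x_1),\dots,f(x_n))$. For $n=1$ there is nothing to prove, since every partition gives equality.

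Consequently, the event on which the equality fails is contained in the collision event
\[
\{\exists\, i<j:\ X_{i,\varepsilon}=X_{j,\varepsilon}\},
\]
where $X_{i,\varepsilon}$ denotes the cell containing $X_i$. This event does not depend on $c$ or on the discrepancy, so the collision bound from the proof of Lemma \ref{lemma:probably_the_same} applies verbatim. By Boole's inequality,
\[
\Pr(\text{collision})\le \binom n2\sum_{A\in\mathcal X_\varepsilon}P_X(A)^2\le\binom n2\Pr(\|X_1-X_2\|_\infty\le\varepsilon).
\]
The middle bound holds because two points in a common cell are within $\varepsilon$ of each other in the sup norm. As $\varepsilon\downarrow0$, the right-hand probability tends to $\Pr(X_1=X_2)$ by continuity from above. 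Since $P_X$ is atomless, $\Pr(X_1=X_2)=\int P_X(\{x\})\,dP_X(x)=0$. Choosing $\varepsilon_{n,\delta}$ small enough that $\binom n2\Pr(\|X_1-X_2\|_\infty\le\varepsilon_{n,\delta})<\delta$ then gives the claim.

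The main obstacle, mild as it is, is the first step: confirming that the offset equivalence is an exact pointwise identity on the non-collision event for a fixed $\varepsilon$, rather than only for an $\varepsilon$ tailored to $S$. This requires two checks. First, the completing-the-square identity uses $\sigma_i^2=1$, and this holds because the eligible values are $\pm1/(2c)$, with $c\ge1/2$ keeping them in $\mathcal Y$. Second, the map $f\mapsto (f(x_i))_i$ pushes $\lambda_{\bar{\mathcal F}_{\sigma,\varepsilon}}$ forward to the uniform law on $\{-1,1\}^n$ whenever the $x_i$ occupy distinct cells. Once both are verified, measurability of the failure event is handled as in Appendix \ref{A.1}, using completions: the failure set is a subset of the measurable collision event, and so it suffices to bound the collision probability.
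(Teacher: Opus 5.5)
Your proposal is correct and takes the paper's route: the paper proves this lemma simply by declaring it identical to Lemma \ref{lemma:probably_the_same}. That is, the failure event lies inside the collision event, whose probability Boole's inequality and atomlessness drive to zero. Your differences are cosmetic. You bound $\sum_{A\in\mathcal X_\varepsilon}P_X(A)^2$ directly by $\Pr(\|X_1-X_2\|_\infty\le\varepsilon)$ rather than passing through $p_{max}$. You also state explicitly, which the paper leaves implicit, that the offset identity holds for a fixed $\varepsilon$ whenever the sample points lie in distinct cells.
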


\begin{proof}
    The proof is identical to Lemma \ref{lemma:probably_the_same}.
\end{proof}

The proof follows the argument of Proposition \ref{prop:bounded_difference}. For each coarsening $\varepsilon$, define
\[
\Delta(S)=\tilde R_n^{\mathrm{off}}(\mathcal F_\Theta,S,c)
 -\left[\frac{1}{4c}-\tilde e_n\!\left(\mathcal F_\Theta,
       \frac{1}{2c}\bar{\mathcal F}_{\sigma,\varepsilon},d_n,S\right)\right].
\]
Both expressions in $\Delta(S)$ are bounded above by $1/(4c)$ and below by $-c$. The upper bound follows by completing the square. For the lower bound, fix any $f_\theta\in\mathcal F_\Theta$ before taking the expectation over the signs. Under either distribution, the linear term has mean zero and $-c n^{-1}\sum_{i=1}^n f_\theta(x_i)^2\geq-c$. Hence $|\Delta(S)|\leq c+1/(4c)$.

Moreover, $\Delta(S)=0$ whenever observations occupy distinct partition cells. Therefore,
\[
\left|E_S[\Delta(S)]\right|
\leq E_S[|\Delta(S)|]
\leq\left(c+\frac{1}{4c}\right)
       \Pr_{S\sim P_X^n}(\text{collision}).
\]
The last expression tends to zero as $\varepsilon\downarrow0$ by Lemma \ref{lemma:probably_the_same}, proving the result.

\subsection{Proof of Proposition \ref{prop:endpt-bound}}\label{A.4}

The envelope inherited from Section~\ref{sec:notation}, extended to the
endpoint losses, gives
$B_f(z)\le\mathbb E[G(X)\mid Z=z]$ and hence $D\le\mathbb E G(X)<\infty$.
Joint measurability of predictions and losses also permits integration over eligible functions and conditional covariate distributions.

Fix an eligible $f$, a parameter $\theta$, and a conditioning value $z$ for
which the assumptions hold. Write $a(t)=f_\theta(z,t)$, $v(t)=f(z,t)$,
and $Q_z=P_{T\mid Z=z}$. Convexity in the prediction gives
\[
 \ell(a(t),v(t))
 \le\ell(-1,v(t))+\frac{a(t)+1}{2}\Delta(v(t)).
\]
The function $a$ is nondecreasing and $\Delta\circ v$ is nonincreasing.
For independent $T,T'\sim Q_z$,
\begin{align*}
 2\operatorname{Cov}_{Q_z}(a,\Delta\circ v)
 =\mathbb E\big[&(a(T)-a(T'))\\
 &\quad\cdot(\Delta(v(T))-\Delta(v(T')))\big]\le0.
\end{align*}
It follows that
\begin{align*}
 d_z(f_\theta,f)
 &\le d_z(-1,f)
   +\frac{\mathbb E_{Q_z}a+1}{2}\{d_z(1,f)-d_z(-1,f)\}\\
 &\le\max\{d_z(-1,f),d_z(1,f)\}=B_f(z).
\end{align*}
Integrating over $Z$ yields $d(f_\theta,f)\le B(f)$.
The conditional distribution of $T$ can vary with $z$; independence of
$Z$ and $T$ has not been used.

For almost every eligible function and sample,
\[
 \sup_{\theta\in\Theta}\{d(f_\theta,f)-d_n(f_\theta,f)\}
 \le B(f)-\inf_{\theta\in\Theta}d_n(f_\theta,f).
\]
{The countable set $\Theta_0$ in the proof of Lemma \ref{lemma:uniform} can be chosen independently of $f$. By joint measurability in Section~\ref{sec:notation}, taking these extrema over $\Theta_0$ gives jointly measurable versions in $(f,S)$. They agree with the original extrema with probability one under $P_X^n$ for almost every $f$ under $\lambda_{\mathcal F}$. The common integrable envelope makes the resulting expectations finite.}
Taking expectations gives $\operatorname{Gen}_n\le D-e_n$.
The expected supremum is nonnegative because it dominates the gap of any
fixed model element, whose expectation is zero. Nonnegative losses imply
$e_n\ge0$, while moving an infimum outside expectation gives $e_n\le e$.
Finally, the population comparison gives $e\le D$.
This proves the proposition. The optimization remains over the original
parameter space and is never performed separately within conditioning
values.\hfill$\square$

\section{Axiomatization of Rademacher Complexity}\label{sec:axiomatization}

Proposition \ref{prop:equivalence_in_sample} shows that for a particular eligible set and a particular discrepancy function, empirical Rademacher complexity equals one minus empirical approximation error. {This appendix gives an axiomatic characterization:} the eligible set, the discrepancy function, and the aggregation across synthetic outcomes are each pinned down by interpretable conditions, so that \textit{any} empirical approximation error satisfying them must be an affine transformation of empirical Rademacher complexity (Proposition \ref{prop:axiomatization}). The axioms imposed on the aggregation are finite-sample counterparts of the axioms with which \cite{fudenberg2023restrictiveness} characterize approximation error; the axioms imposed on the eligible set and on the discrepancy function are new, and they are what select the Rademacher structure --- all binary synthetic outcome vectors, evaluated by one minus correlation and aggregated uniformly --- within the restrictiveness framework.

\subsection{Environment}\label{sec:axiom_environment}

We follow the notation of the main text, with the following conventions and deviations, each of which is necessary for the exercise. Throughout this appendix the sample $S = (x_1, \dots, x_n)$ is fixed, $n \geq 1$, {and the sampled covariates are pairwise distinct, as in Proposition~\ref{prop:equivalence_in_sample}}. {As in the main text, $\mathcal Y=[-1,1]$; the affineness axiom below averages model predictions within this interval.} For $f \in \bar{\mathcal{F}}$ we write $f|_S := (f(x_1), \dots, f(x_n)) \in [-1,1]^n$ for its vector of sample predictions. {Since the sampled covariates are pairwise distinct and $\bar{\mathcal F}$ contains all functions from $\mathcal X$ to $[-1,1]$, every vector in $[-1,1]^n$ is the restriction of some function to $S$.} Model classes are arbitrary nonempty subsets $\mathcal{F}_\Theta \subseteq \bar{\mathcal{F}}$.\footnote{No compactness of $\Theta$ or continuity in $\theta$ is needed in this appendix: all statements are made with infima, which the arguments never require to be attained. {Compactness and continuity of the discrepancy in parameters on the fixed sample suffice for attainment, but nothing here depends on it.}}

First, eligible sets are nonempty sets of synthetic outcome \textit{vectors} $F \subseteq \cube$ rather than sets of functions: an empirical object can distinguish functions only through their sample predictions, and the vectors $\sigma \in \cube$ are exactly the objects over which $\tilde{R}_n$ averages. {Under the maintained distinctness assumption, the vector and function formulations coincide by Lemma~\ref{lemma:sigma_function}.} We accordingly write $\tilde{e}_n(\mathcal{F}_\Theta, F, d, S)$ with a set of vectors in the second argument.

Second, {only within this appendix, }the empirical approximation error is treated as a \textit{primitive}: a functional
\[
\tilde{e}_n : (\mathcal{F}_\Theta,\, F,\, d,\, S) \longmapsto \tilde{e}_n(\mathcal{F}_\Theta, F, d, S) \in \mathbb{R}_+,
\]
defined for every nonempty model class $\mathcal{F}_\Theta \subseteq \bar{\mathcal{F}}$, every nonempty eligible set $F \subseteq \cube$, and every admissible discrepancy function $d$, rather than an object defined by the formula of Table \ref{tab:objects}. If the expected-infimum form and the uniform weighting were built into the definition, the aggregation axioms below would have nothing to determine; {the representation below identifies any remaining outcome-specific multipliers, which Symmetry equalizes at $d_n$ (Lemma \ref{lemma:finite_representation}). At this discrepancy, $\tilde{e}_n$ coincides with the object of Table \ref{tab:objects} up to a positive constant. The counting weights themselves are specified in Axiom \ref{ax:linearity}.}

Third, we use the following vector notation. For $\sigma \in \cube$, $\sigma^{(i)}$ denotes the vector that agrees with $\sigma$ except that the sign of its $i$th entry is reversed. For $\sigma, \sigma' \in \cube$ and $A \subseteq \{1, \dots, n\}$, $(\sigma_A, \sigma'_{-A})$ denotes the vector whose $i$th entry is $\sigma_i$ if $i \in A$ and $\sigma'_i$ otherwise. For $q \in \cube$ and $f \in \bar{\mathcal{F}}$, $q \odot f$ denotes (any function with) the sample predictions $(q_1 f(x_1), \dots, q_n f(x_n)) \in [-1,1]^n$, and $q \odot \sigma := (q_1 \sigma_1, \dots, q_n \sigma_n)$.

Finally, we specify the admissible discrepancy functions. Let $\Dn$ denote the set of functions $d : \bar{\mathcal{F}} \times \cube \rightarrow \mathbb{R}_+$ such that (i) $d(f, \sigma)$ depends on $f$ only through its sample predictions $f|_S$, and (ii) $d(f, \sigma) = 0$ if and only if $f|_S = \sigma$. Condition (i) is what makes a discrepancy function \textit{empirical}: it is computable from the sample. It also renders expressions such as $d(g, \sigma)$ well defined for any prediction vector $g \in [-1,1]^n$ --- evaluate $d$ at any $f \in \bar{\mathcal{F}}$ with $f|_S = g$, {whose existence follows from the distinctness of the sampled covariates and the definition of $\bar{\mathcal F}$} --- and we use this convention freely, e.g.\ $d(\sigma', \sigma)$ for $\sigma' \in \cube$. Condition (ii) is coincidence, the requirement placed on discrepancy functions by \cite{fudenberg2023restrictiveness}, read at the level of sample predictions, the only level at which an empirical object can satisfy it; as observed following Lemma \ref{lemma:inf_sup_correlation}, the discrepancy function used in Proposition \ref{prop:equivalence_in_sample} satisfies both conditions. We write $d_n$ for a generic element of $\Dn$; its characterization below recovers the discrepancy function of Lemma \ref{lemma:inf_sup_correlation} up to a positive multiple. For a model class we write, as in \cite{fudenberg2023restrictiveness}, $d(\mathcal{F}_\Theta, \sigma) := \inf_{f_\theta \in \mathcal{F}_\Theta} d(f_\theta, \sigma)$.

{Pairwise distinctness holds almost surely for iid samples from an atomless $P_X$. With repeated covariates, every function must assign identical predictions to identical covariates, whereas Rademacher signs remain independent across observations. A characterization covering that case can instead be formulated with discrepancy defined on prediction vectors and aggregation on arbitrary nonempty vector classes, then evaluated at the sample predictions of each actual model.}

\subsection{Axioms}\label{sec:axioms}

The first two axioms concern the eligible set $F \subseteq \cube$.

\begin{axiom}[Outcome Richness]\label{ax:outcome_richness}
    For every observation $i \in \{1, \dots, n\}$ and every $s \in \{-1, 1\}$, there exists $\sigma \in F$ such that $\sigma_i = s$.
\end{axiom}

\begin{axiom}[Combination]\label{ax:combination}
    For all $\sigma, \sigma' \in F$ and every $A \subseteq \{1, \dots, n\}$, $(\sigma_A, \sigma'_{-A}) \in F$.
\end{axiom}

Outcome richness says that neither synthetic outcome is ruled out at any observation. Combination says that eligibility imposes no restrictions linking synthetic outcomes across observations.

The next three axioms concern a fixed empirical discrepancy function $d_n \in \Dn$.

\begin{axiom}[Outcome Symmetry]\label{ax:outcome_symmetry}
    For every $q \in \cube$, $f \in \bar{\mathcal{F}}$, and $\sigma \in \cube$,
    \[
    d_n(q \odot f,\, q \odot \sigma) = d_n(f, \sigma).
    \]
\end{axiom}

\begin{axiom}[Equal Weighting]\label{ax:equal_weighting}
    For every $\sigma \in \cube$ and all observations $i, j \in \{1, \dots, n\}$,
    \[
    d_n(\sigma^{(i)}, \sigma) = d_n(\sigma^{(j)}, \sigma).
    \]
\end{axiom}

\begin{axiom}[Affineness]\label{ax:affineness}
    For every $f, g \in \bar{\mathcal{F}}$, $\sigma \in \cube$, and $\lambda \in [0, 1]$,
    \[
    d_n\big(\lambda f + (1-\lambda) g,\, \sigma\big) = \lambda\, d_n(f, \sigma) + (1-\lambda)\, d_n(g, \sigma).
    \]
\end{axiom}

Outcome symmetry says that the discrepancy function responds to whether predictions and synthetic outcomes agree, not to the names of the two outcomes. Equal weighting says that one full error has the same cost at every observation. Affineness says that the discrepancy of an averaged prediction is the average of the discrepancies; interpreting an interior prediction as the mean of a randomized binary prediction, it says that a randomized prediction is evaluated at its expected discrepancy. Affineness is a statement about the interval structure of $\mathcal{Y} = [-1,1]$; it has no counterpart when predictions are restricted to $\{-1,1\}$.

The remaining axioms concern the aggregation $\tilde{e}_n$. Throughout, $\mathcal{F}_\Theta, \mathcal{F}_{\Theta_1}, \mathcal{F}_{\Theta_2}$ are nonempty model classes, $F \subseteq \cube$ is a nonempty eligible set, $d \in \Dn$, and $\alpha > 0$. (The restriction to $\alpha > 0$ is deliberate: $\alpha = 0$ yields $d \equiv 0 \notin \Dn$.)

\begin{axiom}[Nonnegativity]\label{ax:nonnegativity}
    $\tilde{e}_n(\mathcal{F}_\Theta, F, d, S) \geq 0$.
\end{axiom}

\begin{axiom}[Monotonicity]\label{ax:monotonicity}
    If $d(\mathcal{F}_{\Theta_1}, \sigma) \geq d(\mathcal{F}_{\Theta_2}, \sigma)$ for all $\sigma \in F$, then
    \[
    \tilde{e}_n(\mathcal{F}_{\Theta_1}, F, d, S) \geq \tilde{e}_n(\mathcal{F}_{\Theta_2}, F, d, S).
    \]
\end{axiom}

\begin{axiom}[Homogeneity]\label{ax:homogeneity}
    (a) $\tilde{e}_n(\mathcal{F}_\Theta, F, \alpha d, S) = \alpha\, \tilde{e}_n(\mathcal{F}_\Theta, F, d, S)$.

    (b) If $d(\mathcal{F}_{\Theta_1}, \sigma) = \alpha\, d(\mathcal{F}_{\Theta_2}, \sigma)$ for all $\sigma \in F$, then
    \[
    \tilde{e}_n(\mathcal{F}_{\Theta_1}, F, d, S) = \tilde{e}_n(\mathcal{F}_{\Theta_2}, F, \alpha d, S).
    \]
\end{axiom}

\begin{axiom}[Linearity]\label{ax:linearity}
    For every partition of $F$ into nonempty disjoint sets $F_1, \dots, F_K$,
    \[
    \tilde{e}_n(\mathcal{F}_\Theta, F, d, S) = \sum_{k=1}^K \frac{|F_k|}{|F|}\, \tilde{e}_n(\mathcal{F}_\Theta, F_k, d, S).
    \]
\end{axiom}

\begin{axiom}[Symmetry]\label{ax:symmetry}
    For every bijection $\tau : F \rightarrow F$: if $d(\mathcal{F}_{\Theta_1}, \sigma) = d(\mathcal{F}_{\Theta_2}, \tau(\sigma))$ for all $\sigma \in F$, then $\tilde{e}_n(\mathcal{F}_{\Theta_1}, F, d, S) = \tilde{e}_n(\mathcal{F}_{\Theta_2}, F, d, S)$.
\end{axiom}

\begin{axiom}[Nondegeneracy]\label{ax:nondegeneracy}
    $\tilde{e}_n(\cdot, \cdot, d_n, S) \not\equiv 0$: some model class and eligible set satisfy $\tilde{e}_n(\mathcal{F}_\Theta, F, d_n, S) > 0$.
\end{axiom}

Axioms \ref{ax:nonnegativity}--\ref{ax:symmetry} are the axioms of Section 4 of \cite{fudenberg2023restrictiveness} with the prior over the eligible set taken to be counting measure on $\cube$, the canonical full-support choice in a finite environment; Axiom \ref{ax:linearity} is their Linearity with conditional weights $|F_k|/|F|$, and finite partitions suffice since $F$ is finite. Nondegeneracy is stated at the discrepancy function $d_n$ under consideration because Axioms \ref{ax:nonnegativity}--\ref{ax:symmetry} do not link the behavior of $\tilde{e}_n$ across non-proportional discrepancy functions: an aggregation could vanish identically at $d_n$ and not elsewhere.

\subsection{Characterization}\label{sec:characterization}

\begin{proposition}\label{prop:axiomatization}
    Fix a sample $S = (x_1, \dots, x_n)$ {of pairwise-distinct covariates} and let $\mathcal{Y} = [-1, 1]$. Suppose that
    \begin{enumerate}
        \item the eligible set $F \subseteq \cube$ is nonempty and satisfies Axioms \ref{ax:outcome_richness} and \ref{ax:combination};
        \item $d_n \in \Dn$ satisfies Axioms \ref{ax:outcome_symmetry}, \ref{ax:equal_weighting}, and \ref{ax:affineness};
        \item $\tilde{e}_n$ satisfies Axioms \ref{ax:nonnegativity}--\ref{ax:symmetry} and satisfies Axiom \ref{ax:nondegeneracy} at $d_n$.
    \end{enumerate}
    Then $F = \cube$; there exists a constant $a_n > 0$ such that
    \[
    d_n(f, \sigma) = a_n \left( 1 - \frac{1}{n} \sum_{i=1}^n f(x_i) \sigma_i \right) \qquad \text{for all } f \in \bar{\mathcal{F}},\ \sigma \in \cube;
    \]
    and there exists a constant $b_n > 0$ such that every nonempty model class $\mathcal{F}_\Theta \subseteq \bar{\mathcal{F}}$ satisfies
    \begin{equation}\label{eq:axiomatization}
    \tilde{e}_n(\mathcal{F}_\Theta, F, d_n, S) = b_n \left[ 1 - \tilde{R}_n(\mathcal{F}_\Theta, S) \right].
    \end{equation}
    Conversely, the eligible set $F = \cube$, every discrepancy function $d^*_\gamma(f, \sigma) := \gamma \left( 1 - \frac{1}{n} \sum_{i=1}^n f(x_i) \sigma_i \right)$ with $\gamma > 0$, and every functional
    \[
    \tilde{e}^*_n(\mathcal{F}_\Theta, F', d, S) := \kappa \cdot \frac{1}{|F'|} \sum_{\sigma \in F'} d(\mathcal{F}_\Theta, \sigma), \qquad \kappa > 0,
    \]
    satisfy conditions 1--3, and \eqref{eq:axiomatization} holds for them with $b_n = \kappa \gamma$.
\end{proposition}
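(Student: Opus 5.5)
The proof has three stages, matching the three conclusions: first pin down $F$, then $d_n$, then $\tilde e_n$. The converse is checked at the end.

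\emph{Eligible set.} Fix a target $\sigma^*\in\cube$. By Outcome Richness (Axiom \ref{ax:outcome_richness}), for each $i$ there is some $\sigma^{i}\in F$ with $\sigma^{i}_i=\sigma^*_i$. Start from $\sigma^{1}$ and, for $i=2,\dots,n$, apply Combination (Axiom \ref{ax:combination}) with $A=\{i\}$ to overwrite coordinate $i$ by $\sigma^{i}_i$. Each step stays in $F$, and the final vector is $\sigma^*$, so $F=\cube$.

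\emph{Discrepancy.} Fix $\sigma$ and view $g\mapsto d_n(g,\sigma)$ as a function on the box $[-1,1]^n$. This is legitimate by condition (i) of $\Dn$ and the distinctness of the sample. Affineness (Axiom \ref{ax:affineness}) for all $\lambda\in[0,1]$ says this map is affine along every segment of a convex set with nonempty interior. The standard extension argument then gives $d_n(g,\sigma)=\sum_i w_{\sigma,i}(g_i-\sigma_i)$, where the constant is fixed by $d_n(\sigma,\sigma)=0$.

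Nonnegativity on the box forces $w_{\sigma,i}\sigma_i\le 0$. Coincidence makes this strict: if $w_{\sigma,i}=0$, moving $g_i$ away from $\sigma_i$ would give a zero discrepancy at $g\ne\sigma$. Equal Weighting (Axiom \ref{ax:equal_weighting}) gives $d_n(\sigma^{(i)},\sigma)=-2w_{\sigma,i}\sigma_i$ independent of $i$. Hence $w_{\sigma,i}=-\beta_\sigma\sigma_i$ with $\beta_\sigma>0$.

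Outcome Symmetry (Axiom \ref{ax:outcome_symmetry}) with $q=\sigma\odot\sigma'$ maps $\sigma$ to $\sigma'$ and shows $\beta_\sigma=\beta_{\sigma'}$. Using $\sigma_i^2=1$, this yields $d_n(f,\sigma)=a_n\bigl(1-\frac1n\sum_i f(x_i)\sigma_i\bigr)$ with $a_n=n\beta>0$.

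\emph{Aggregation.} Apply Linearity (Axiom \ref{ax:linearity}) with the partition of $F=\cube$ into singletons:
\[
\tilde e_n(\mathcal F_\Theta,F,d_n,S)=2^{-n}\sum_\sigma \tilde e_n(\mathcal F_\Theta,\{\sigma\},d_n,S).
\]
On a singleton, Monotonicity (Axiom \ref{ax:monotonicity}) applied in both directions shows $\tilde e_n(\mathcal F_\Theta,\{\sigma\},d_n,S)=\phi_\sigma\bigl(d_n(\mathcal F_\Theta,\sigma)\bigr)$ for some function $\phi_\sigma$. By affineness, a single-function class attains every value in $[0,2a_n]$.

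Combining Homogeneity (a) and (b) (Axiom \ref{ax:homogeneity}): if $d_n(\mathcal F_{\Theta_1},\sigma)=\alpha\, d_n(\mathcal F_{\Theta_2},\sigma)$, then
\[
\tilde e_n(\mathcal F_{\Theta_1},\{\sigma\},d_n,S)=\tilde e_n(\mathcal F_{\Theta_2},\{\sigma\},\alpha d_n,S)=\alpha\,\tilde e_n(\mathcal F_{\Theta_2},\{\sigma\},d_n,S).
\]
So $\phi_\sigma(\alpha t)=\alpha\phi_\sigma(t)$ on the attainable range, which gives $\phi_\sigma(t)=k_\sigma t$ with $k_\sigma\ge0$ by Nonnegativity. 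This is the step I expect the paper's Lemma \ref{lemma:finite_representation} to encode.

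\emph{Main obstacle: equalizing the $k_\sigma$ via Symmetry.} Symmetry (Axiom \ref{ax:symmetry}) applies to whole model classes, whose value profiles $\sigma\mapsto d_n(\mathcal F_\Theta,\sigma)$ are not arbitrary. My first attempt uses sign-flip bijections $\tau_q(\sigma)=q\odot\sigma$. Outcome Symmetry gives $d_n(q\odot\mathcal F_\Theta,\tau_q(\sigma))=d_n(\mathcal F_\Theta,\sigma)$, so Symmetry yields $\sum_\sigma k_\sigma v(\sigma)=\sum_\sigma k_\sigma v(q\odot\sigma)$ for every achievable profile $v$.

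I would then choose classes whose profiles separate a single outcome. A natural candidate is the class of all vertices except $\sigma_0$, which has profile $2a_n/n\cdot\mathbf 1_{\sigma=\sigma_0}$; mixing it with the class $\{0\}$ (constant profile $a_n$) via unions and infima gives more profiles. Plugging such profiles into the identity should force $k_{\sigma_0}=k_{q\odot\sigma_0}$ for all $q$. Since the group of sign flips acts transitively on $\cube$, all $k_\sigma$ equal a common $k$. Nondegeneracy (Axiom \ref{ax:nondegeneracy}) then gives $k>0$.

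\emph{Conclusion and converse.} With a common $k$, Lemma \ref{lemma:inf_sup_correlation} and the argument of Proposition \ref{prop:equivalence_in_sample} give
\[
\tilde e_n=k\,2^{-n}\sum_\sigma a_n\Bigl(1-\sup_{f_\theta}\tfrac1n\textstyle\sum_i f_\theta(x_i)\sigma_i\Bigr)=b_n\bigl[1-\tilde R_n(\mathcal F_\Theta,S)\bigr],
\]
with $b_n=ka_n$. For the converse, each axiom is verified directly for $F=\cube$, $d^*_\gamma$, and $\tilde e^*_n$. The only nontrivial checks are Linearity, which is a regrouping of the average, and Homogeneity (b), which follows because scaling $d$ scales its infimum. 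Then \eqref{eq:axiomatization} holds with $b_n=\kappa\gamma$ by the same computation.
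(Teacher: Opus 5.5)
Your proof is correct. For the eligible set, the form of $d_n$, and the singleton representation $\tilde e_n(\mathcal F_\Theta,\{\sigma\},d_n,S)=k_\sigma\, d_n(\mathcal F_\Theta,\sigma)$, you follow Lemmas~\ref{lemma:eligible_set}--\ref{lemma:discrepancy_characterization} and Steps 1--2 of Lemma~\ref{lemma:finite_representation}, up to reordering. You fix the intercept by coincidence and apply Equal Weighting at each $\sigma$ before using Outcome Symmetry to equalize $\beta_\sigma$. The paper instead first uses Outcome Symmetry to make the intercept constant and the slopes $w_i\sigma_i$.

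The genuine difference is how you equalize the multipliers $k_\sigma$. The paper applies Symmetry on a two-point eligible set $\{\sigma,\sigma'\}$, with the transposition and the singleton classes whose sample predictions are $\sigma$ and $\sigma'$. This uses only $d_n(\sigma',\sigma)=d_n(\sigma,\sigma')$. So the paper's equalization holds for any discrepancy symmetric on binary pairs, needs no Outcome Symmetry, and uses only the simplest model classes.

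You stay on $F=\cube$ and use the sign-flip bijections $\tau_q$ supplied by Outcome Symmetry. This needs a non-singleton class and transitivity of the sign-flip group. In exchange, it invokes Symmetry only at the eligible set that the first two axioms select.

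The step you hedged on closes immediately. The class of all vertices other than $\sigma_0$ has profile $v=(2a_n/n)\mathbf 1_{\sigma=\sigma_0}$. Inserting it into $\sum_\sigma k_\sigma v(\sigma)=\sum_\sigma k_\sigma v(q\odot\sigma)$ gives $k_{\sigma_0}=k_{q\odot\sigma_0}$ directly, so no unions with $\{0\}$ are needed.

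Finally, Nondegeneracy may be witnessed at an eligible set other than $\cube$. Before concluding $k>0$, say explicitly that Linearity over singletons extends your representation to every $F'$, as in the paper's \eqref{eq:finite_representation}.
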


{The constants may depend on the fixed sample and discrepancy, but not on the model class. The converse verifies the displayed construction; it does not require the same $\kappa$ at non-proportional discrepancies.}

The proof proceeds through four lemmas: Lemma \ref{lemma:eligible_set} characterizes the eligible set, Lemmas \ref{lemma:affine_representation} and \ref{lemma:discrepancy_characterization} characterize the discrepancy function, and Lemma \ref{lemma:finite_representation} characterizes the aggregation. The proof of the proposition then assembles the four.

\begin{lemma}\label{lemma:eligible_set}
    A nonempty $F \subseteq \cube$ satisfies Axioms \ref{ax:outcome_richness} and \ref{ax:combination} if and only if $F = \cube$.
\end{lemma}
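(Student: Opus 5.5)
The plan is to prove both directions directly from the definitions; the axioms refer only to coordinates of vectors in $\cube$, so no earlier result is needed.

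\emph{The ``if'' direction.} Suppose $F=\cube$. Outcome Richness holds because, for any $i$ and $s\in\{-1,1\}$, the vector with every entry equal to $s$ belongs to $\cube$ and has $i$th entry $s$. Combination holds because $(\sigma_A,\sigma'_{-A})$ has all entries in $\{-1,1\}$, so it lies in $\cube=F$.

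\emph{The ``only if'' direction.} Suppose $F$ is nonempty and satisfies both axioms, and fix an arbitrary target $\tau\in\cube$. By Outcome Richness, for each $i\in\{1,\dots,n\}$ there is some $\sigma^{i}\in F$ with $\sigma^{i}_i=\tau_i$. We build $\tau$ by splicing these vectors together one coordinate at a time, using Combination at each step.

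Set $\rho^{1}:=\sigma^{1}\in F$. For $k\ge2$, define
\[
\rho^{k}:=\bigl(\rho^{k-1}_{A_k},\,\sigma^{k}_{-A_k}\bigr), \qquad A_k=\{1,\dots,k-1\}\cup\{k+1,\dots,n\}.
\]
Since $\rho^{k-1}\in F$ and $\sigma^{k}\in F$, Combination gives $\rho^{k}\in F$. This vector agrees with $\rho^{k-1}$ off coordinate $k$ and has $k$th entry $\sigma^{k}_k=\tau_k$.

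By induction on $k$, $\rho^{k}$ agrees with $\tau$ on coordinates $1,\dots,k$. The base case holds because $\rho^1_1=\sigma^1_1=\tau_1$. For the inductive step, the first $k-1$ coordinates are inherited unchanged from $\rho^{k-1}$, and coordinate $k$ was just set to $\tau_k$.

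Hence $\rho^{n}=\tau$ and $\tau\in F$. Since $\tau$ was arbitrary, $\cube\subseteq F$, and together with $F\subseteq\cube$ this gives $F=\cube$.

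A shorter route is a single application of Combination with $A=\{i:\sigma_i=\tau_i\}$ for one fixed $\sigma\in F$. This does not work in general, because each application takes only two vectors, so the $n$ vectors $\sigma^{i}$ cannot be merged at once. The iteration above is therefore the substantive step.

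The case $n=1$ is included: Outcome Richness alone already forces both vectors $(-1)$ and $(1)$ into $F$.
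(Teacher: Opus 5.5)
Your proof is correct and follows the paper's argument: both splice the Outcome Richness witnesses into the target one coordinate at a time by repeated use of Combination. Your $A_k=\{1,\dots,n\}\setminus\{k\}$, with the roles of the two vectors swapped, plays the part of the paper's $A=\{i\}$, and you seed the recursion with $\sigma^1$ where the paper uses an arbitrary $\sigma^0\in F$.
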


\begin{proof}
    The full set clearly satisfies both axioms. Conversely, fix a target $s \in \cube$ and any $\sigma^0 \in F$. Axiom \ref{ax:outcome_richness} provides, for each $i$, a vector $\tau^i \in F$ with $\tau^i_i = s_i$. Recursively define $\sigma^i := (\tau^i_{\{i\}}, \sigma^{i-1}_{-\{i\}}) \in F$, applying Axiom \ref{ax:combination} with $A = \{i\}$. By induction the first $i$ entries of $\sigma^i$ agree with $s$; hence $\sigma^n = s \in F$.
\end{proof}

\begin{lemma}\label{lemma:affine_representation}
    Let $h : [-1,1]^n \rightarrow \mathbb{R}$ satisfy $h(\lambda v + (1-\lambda) w) = \lambda h(v) + (1-\lambda) h(w)$ for all $v, w \in [-1,1]^n$ and $\lambda \in [0,1]$. Then there are constants $\alpha, \beta_1, \dots, \beta_n$ such that
    \[
    h(v) = \alpha + \sum_{i=1}^n \beta_i v_i \qquad \text{for all } v \in [-1,1]^n.
    \]
\end{lemma}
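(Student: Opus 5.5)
The plan is to reduce the statement to the standard fact that a function preserving convex combinations on a convex body with nonempty interior is the restriction of an affine map. First set $\alpha := h(0)$ and define $k(v) := h(v) - \alpha$ on $[-1,1]^n$, so that $k(0)=0$ and $k$ still preserves convex combinations; constants are affine, so the identity is inherited. Taking $w=0$ in the hypothesis gives $k(\lambda v) = \lambda k(v)$ for all $\lambda\in[0,1]$ and $v\in[-1,1]^n$. This is the positive homogeneity we will use to pass from the cube to all of $\mathbb{R}^n$.

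Next, extend $k$ to $\bar k:\mathbb{R}^n\to\mathbb{R}$ by $\bar k(u) := t\,k(u/t)$ for any $t\ge \max\{1,\|u\|_\infty\}$. The homogeneity on the cube shows that this does not depend on $t$: for $t'\ge t$, apply $k(\lambda v)=\lambda k(v)$ with $v=u/t$ and $\lambda=t/t'$. Moreover $\bar k$ agrees with $k$ on $[-1,1]^n$.

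Then verify that $\bar k$ is additive and homogeneous for all real scalars.
\begin{itemize}
\item \emph{Additivity.} Given $u,u'$, choose $t$ large enough that $u/(2t)$, $u'/(2t)$ and $(u+u')/(2t)$ all lie in the cube. Then
\[
\bar k(u+u') = 2t\,k\!\left(\tfrac12\tfrac{u}{t}+\tfrac12\tfrac{u'}{t}\right) = t\,k(u/t)+t\,k(u'/t) = \bar k(u)+\bar k(u').
\]
The middle equality is the midpoint case of the hypothesis.
\item \emph{Nonnegative scalars.} Homogeneity for $c\ge0$ follows directly from the definition of $\bar k$.
\item \emph{Negative scalars.} Since $\bar k(0)=0$, additivity gives $\bar k(-u) = -\bar k(u)$, which extends homogeneity to negative $c$.
\end{itemize}
Hence $\bar k$ is linear on $\mathbb{R}^n$. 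Setting $\beta_i := \bar k(e_i) = k(e_i)$, with $e_i$ the standard basis vectors (which lie in the cube), gives $k(v)=\sum_i\beta_i v_i$ and therefore $h(v)=\alpha+\sum_i\beta_i v_i$.

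The main obstacle is the well-definedness of the extension and the additivity step. Both rely only on the hypothesis with $w=0$ and with $\lambda=1/2$, together with careful scaling so that every point stays inside $[-1,1]^n$. No continuity or measurability assumption is needed, because linearity on $\mathbb{R}^n$ follows from exact additivity and real homogeneity, not from Cauchy-equation regularity.
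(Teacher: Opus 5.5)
Your proof is correct, but it takes a different route from the paper's.

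The paper never leaves the cube. It first extends the two-point identity to finite convex combinations by induction. Then, with $g=h-h(0)$, it combines three facts:
\begin{itemize}
\item homogeneity on $[0,1]$;
\item oddness, obtained from the midpoint of $v$ and $-v$;
\item the observation that $v/n$ is the uniform average of the $n$ points $v_i e^{(i)}\in[-1,1]^n$.
\end{itemize}
Together these give $g(v)=\sum_i g(v_i e^{(i)})$, and then $g(v_i e^{(i)})=v_i\beta_i$ coordinate by coordinate.

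You instead extend $k$ by positive homogeneity to all of $\mathbb{R}^n$ and show that the extension is additive and homogeneous for all real scalars, hence linear. What your route buys:
\begin{itemize}
\item It uses only the cases $w=0$ and $\lambda=\tfrac12$ of the hypothesis, and it needs no $n$-point induction.
\item It uses nothing about the cube beyond convexity and the origin lying in its interior. So it applies verbatim to any convex domain containing the origin in its algebraic interior, in any vector space. The paper's decomposition instead relies on the specific facts that $-v$ and the points $v_i e^{(i)}$ belong to the domain.
\end{itemize}
What the paper's route buys in exchange: it never constructs an auxiliary function on $\mathbb{R}^n$, so there is no well-definedness check.

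One small correction. In the additivity step, the condition you state on $t$ is not the one your computation uses. For $t\,k(u/t)$ and $t\,k(u'/t)$ to be the defining expressions of $\bar k(u)$ and $\bar k(u')$, and for the midpoint identity to apply, you need $u/t,\ u'/t\in[-1,1]^n$ and $t\ge 1$. In other words, you need $t\ge\max\{1,\|u\|_\infty,\|u'\|_\infty\}$. That choice automatically puts $(u+u')/(2t)$ in the cube. Membership of $u/(2t)$ and $u'/(2t)$ alone would not suffice. Since you wrote ``large enough,'' this is a slip in wording rather than a gap.
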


\begin{proof}
    First, pairwise affineness extends to finite convex combinations by induction: for weights $\lambda_1, \dots, \lambda_m \geq 0$ summing to one with $\lambda_m < 1$, write $\sum_j \lambda_j v^j = (1-\lambda_m) u + \lambda_m v^m$ with $u := \sum_{j < m} \frac{\lambda_j}{1-\lambda_m} v^j \in [-1,1]^n$, and apply the two-point identity and the inductive hypothesis.

    Set $g(v) := h(v) - h(0)$, so that $g(0) = 0$. Then (i) $g(\lambda v) = g(\lambda v + (1-\lambda) 0) = \lambda g(v)$ for $\lambda \in [0,1]$, and (ii) $0 = g(0) = g\big(\frac{1}{2} v + \frac{1}{2}(-v)\big) = \frac{1}{2} g(v) + \frac{1}{2} g(-v)$, so $g(-v) = -g(v)$. Let $e^{(i)}$ denote the $i$th standard basis vector and $\beta_i := g(e^{(i)})$. For any $v$, the point $v/n$ is the convex combination with weights $1/n$ of the points $v_i e^{(i)} \in [-1,1]^n$, so by finite affineness and (i),
    \[
    \frac{1}{n}\, g(v) = g\left( \frac{v}{n} \right) = \frac{1}{n} \sum_{i=1}^n g\big(v_i e^{(i)}\big), \qquad \text{i.e.} \qquad g(v) = \sum_{i=1}^n g\big(v_i e^{(i)}\big).
    \]
    Finally, $g(v_i e^{(i)}) = v_i \beta_i$: for $v_i \geq 0$ this is (i) with $\lambda = v_i$; for $v_i < 0$, write $g(v_i e^{(i)}) = g\big(|v_i| (-e^{(i)})\big) = |v_i|\, g(-e^{(i)}) = -|v_i| \beta_i = v_i \beta_i$ using (i) and (ii). Hence $h(v) = h(0) + \sum_i \beta_i v_i$, and $\alpha := h(0)$.
\end{proof}

\begin{lemma}\label{lemma:discrepancy_characterization}
    A discrepancy function $d_n \in \Dn$ satisfies Axioms \ref{ax:outcome_symmetry}, \ref{ax:equal_weighting}, and \ref{ax:affineness} if and only if there exists $a_n > 0$ such that
    \begin{equation}\label{eq:corr_form}
    d_n(f, \sigma) = a_n \left( 1 - \frac{1}{n} \sum_{i=1}^n f(x_i) \sigma_i \right) \qquad \text{for all } f \in \bar{\mathcal{F}},\ \sigma \in \cube.
    \end{equation}
    In this case, $d_n(\sigma', \sigma) = d_n(\sigma, \sigma')$ for all $\sigma, \sigma' \in \cube$.
\end{lemma}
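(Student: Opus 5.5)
The plan is to reduce $d_n$ to a function on prediction vectors for each fixed synthetic outcome, use Affineness with Lemma~\ref{lemma:affine_representation} to make it affine, and then pin down the coefficients with Outcome Symmetry, coincidence and Equal Weighting. For the setup, fix $\sigma \in \cube$ and define $h_\sigma : [-1,1]^n \to \mathbb{R}$ by $h_\sigma(v) := d_n(v,\sigma)$. This uses the convention of Section~\ref{sec:axiom_environment}: by condition (i) of $\Dn$, $d_n(f,\sigma)$ depends on $f$ only through $f|_S$. Because the sampled covariates are distinct, every $v \in [-1,1]^n$ equals $f|_S$ for some $f \in \bar{\mathcal F}$, so $h_\sigma$ is defined on the whole cube $[-1,1]^n$.

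\textbf{Affine form.} Axiom~\ref{ax:affineness} says precisely that $h_\sigma$ is affine along segments, since $(\lambda f + (1-\lambda)g)|_S = \lambda f|_S + (1-\lambda) g|_S$. Lemma~\ref{lemma:affine_representation} then gives
\[
h_\sigma(v) = \alpha(\sigma) + \sum_{i=1}^n \beta_i(\sigma) v_i .
\]

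\textbf{Reduction to one outcome.} Let $\mathbf 1 := (1,\dots,1)$. Apply Axiom~\ref{ax:outcome_symmetry} with $q = \sigma$; since $\sigma \odot \sigma = \mathbf 1$,
\[
d_n(v,\sigma) = d_n(\sigma \odot v, \mathbf 1).
\]
Writing $\alpha := \alpha(\mathbf 1)$ and $\beta_i := \beta_i(\mathbf 1)$, this becomes
\[
d_n(v,\sigma) = \alpha + \sum_i \beta_i \sigma_i v_i \qquad \text{for all } v,\sigma .
\]
It therefore suffices to identify the affine function $h_{\mathbf 1}$.

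\textbf{Identifying the coefficients.} Here I use the two defining properties of $\Dn$: nonnegativity, and condition (ii), which says $h_{\mathbf 1}(w) = 0$ iff $w = \mathbf 1$.
\begin{enumerate}
\item At $w = \mathbf 1$ we get $\alpha + \sum_i \beta_i = 0$.
\item If some $\beta_i > 0$, lowering $w_i$ slightly below $1$ (with the other coordinates at $1$) makes $h_{\mathbf 1}$ negative, contradicting nonnegativity.
\item If some $\beta_i = 0$, then moving $w_i$ away from $1$ leaves $h_{\mathbf 1} = 0$ at a point other than $\mathbf 1$, contradicting condition (ii).
\end{enumerate}
Hence every $\beta_i < 0$. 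Now Axiom~\ref{ax:equal_weighting} at $\sigma = \mathbf 1$ gives $d_n(\mathbf 1^{(i)},\mathbf 1) = -2\beta_i$, and these values are equal across $i$. So $\beta_i = -a_n/n$ for a common $a_n > 0$, and then $\alpha = a_n$ by step 1. Substituting into the reduction yields \eqref{eq:corr_form}. The symmetry $d_n(\sigma',\sigma) = d_n(\sigma,\sigma')$ is then immediate, because the inner product in \eqref{eq:corr_form} is symmetric.

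\textbf{Converse.} Take $d_n$ of the form \eqref{eq:corr_form} with $a_n > 0$.
\begin{enumerate}
\item It lies in $\Dn$. It depends on $f$ only through $f|_S$. It is nonnegative since each $f(x_i)\sigma_i \le 1$. It vanishes iff every $f(x_i)\sigma_i = 1$, that is, iff $f|_S = \sigma$.
\item Outcome Symmetry holds because $q_i^2 = 1$.
\item Equal Weighting holds because $d_n(\sigma^{(i)},\sigma) = 2a_n/n$ for every $i$.
\item Affineness holds because $d_n$ is linear in $f|_S$.
\end{enumerate}

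The only delicate step is the strict-sign argument in the third step. It needs both nonnegativity and the ``only if'' direction of coincidence, applied at interior perturbations of the vertex $\mathbf 1$; those perturbed points are legitimate arguments exactly because predictions range over all of $[-1,1]^n$ under distinctness.
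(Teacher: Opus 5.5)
Your proof is correct and follows the paper's route: you get the affine representation from Lemma~\ref{lemma:affine_representation}, use Outcome Symmetry to reduce to $\sigma=\mathbf 1$, use coincidence and nonnegativity to fix the intercept and force strict signs, and use Equal Weighting to equalize the coefficients. The differences are cosmetic: you apply Outcome Symmetry once with $q=\sigma$ rather than evaluating at $v=0$ and $v=e^{(i)}$. You also get $\beta_i<0$ from perturbations of $\mathbf 1$, whereas the paper reads it off the vertex $\mathbf 1^{(i)}$, where $h_{\mathbf 1}(\mathbf 1^{(i)})=-2\beta_i$ must be nonnegative and nonzero.
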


\begin{proof}
    \textit{Sufficiency.} Fix $a_n > 0$ and let $d_n$ be given by \eqref{eq:corr_form}, so that $d_n$ is $a_n$ times the discrepancy function of Lemma \ref{lemma:inf_sup_correlation}. It depends on $f$ only through $f|_S$, and, as observed following that lemma, it is nonnegative and vanishes if and only if $f|_S = \sigma$; both properties are preserved under multiplication by $a_n > 0$, so $d_n \in \Dn$. Axiom \ref{ax:outcome_symmetry} holds since $(q_i f(x_i))(q_i \sigma_i) = f(x_i) \sigma_i$; Axiom \ref{ax:equal_weighting} holds since $d_n(\sigma^{(i)}, \sigma) = \frac{2 a_n}{n}$ for every $i$; and Axiom \ref{ax:affineness} holds since \eqref{eq:corr_form} is affine in the predictions.

    \textit{Necessity.} Fix $\sigma \in \cube$ and regard $d_n(\cdot, \sigma)$ as a function of the prediction vector $v = f|_S \in [-1,1]^n$, which is well defined by property (i) of $\Dn$. If $f|_S = v$ and $g|_S = w$, then $(\lambda f + (1-\lambda) g)|_S = \lambda v + (1-\lambda) w$, so Axiom \ref{ax:affineness} states precisely that this function of $v$ is affine, and Lemma \ref{lemma:affine_representation} gives
    \begin{equation}\label{eq:affine_rep}
    d_n(v, \sigma) = \alpha(\sigma) + \sum_{i=1}^n \beta_i(\sigma)\, v_i
    \end{equation}
    for some coefficients $\alpha(\sigma), \beta_1(\sigma), \dots, \beta_n(\sigma)$.

    \textit{Axiom \ref{ax:outcome_symmetry} pins down the coefficients.} Fix $q \in \cube$. By Axiom \ref{ax:outcome_symmetry} and \eqref{eq:affine_rep}, for every $v \in [-1,1]^n$,
    \[
    \alpha(q \odot \sigma) + \sum_i \beta_i(q \odot \sigma)\, q_i v_i = d_n(q \odot v,\, q \odot \sigma) = d_n(v, \sigma) = \alpha(\sigma) + \sum_i \beta_i(\sigma)\, v_i.
    \]
    Evaluating at $v = 0$ gives $\alpha(q \odot \sigma) = \alpha(\sigma)$; since $\{ q \odot \sigma : q \in \cube \} = \cube$, the intercept is a constant $\alpha$. Evaluating at $v = e^{(i)}$, the $i$th standard basis vector, and subtracting the intercept gives $\beta_i(q \odot \sigma)\, q_i = \beta_i(\sigma)$. Taking $q = \sigma$, so that $q \odot \sigma$ is the all-ones vector $\mathbf{1}$, yields $\beta_i(\sigma) = \beta_i(\mathbf{1})\, \sigma_i =: w_i \sigma_i$. Hence
    \[
    d_n(v, \sigma) = \alpha + \sum_{i=1}^n w_i\, \sigma_i v_i.
    \]

    \textit{Coincidence pins down the intercept and the signs.} At $v = \sigma$, coincidence forces $0 = \alpha + \sum_i w_i$, so, writing $c_i := -w_i$,
    \[
    d_n(v, \sigma) = \sum_{i=1}^n c_i\, (1 - \sigma_i v_i).
    \]
    At $v = \sigma^{(i)}$ this equals $2 c_i$, which is nonnegative because $d_n$ maps into $\mathbb{R}_+$, and nonzero because $\sigma^{(i)} \neq \sigma$ and coincidence is an equivalence; hence $c_i > 0$ for every $i$.

    \textit{Axiom \ref{ax:equal_weighting} pins down the weights.} $d_n(\sigma^{(i)}, \sigma) = 2 c_i$ must be equal across $i$, so $c_i \equiv c > 0$ and
    \[
    d_n(v, \sigma) = c \sum_{i=1}^n (1 - \sigma_i v_i) = c n \left( 1 - \frac{1}{n} \sum_{i=1}^n v_i \sigma_i \right),
    \]
    which is \eqref{eq:corr_form} with $a_n := c n > 0$.

    Finally, under \eqref{eq:corr_form} both $d_n(\sigma', \sigma)$ and $d_n(\sigma, \sigma')$ equal $a_n \big( 1 - \frac{1}{n} \sum_i \sigma'_i \sigma_i \big)$, which is the symmetry statement.
\end{proof}

\begin{lemma}\label{lemma:finite_representation}
    Suppose $\tilde{e}_n$ satisfies Axioms \ref{ax:monotonicity}--\ref{ax:linearity}. Then there exists $c : \cube \times \Dn \rightarrow \mathbb{R}_+$ with $c(\sigma, \alpha d) = c(\sigma, d)$ for all $\alpha > 0$, such that for every $\mathcal{F}_\Theta$, $F$, and $d$,
    \begin{equation}\label{eq:finite_representation}
    \tilde{e}_n(\mathcal{F}_\Theta, F, d, S) = \frac{1}{|F|} \sum_{\sigma \in F} c(\sigma, d)\, d(\mathcal{F}_\Theta, \sigma).
    \end{equation}
    If $\tilde{e}_n$ also satisfies Axiom \ref{ax:symmetry}, then for every $d \in \Dn$ that is symmetric on binary pairs, i.e.\ $d(\sigma', \sigma) = d(\sigma, \sigma')$ for all $\sigma, \sigma' \in \cube$, there is $k(d) \geq 0$ with $c(\sigma, d) = k(d)$ for all $\sigma \in \cube$.
\end{lemma}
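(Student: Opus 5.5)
The plan is to build the representation one eligible vector at a time, using Linearity, and then fix the weights with Monotonicity and the two Homogeneity axioms. Applying Axiom \ref{ax:linearity} to the partition of $F$ into singletons gives
\[
\tilde e_n(\mathcal F_\Theta,F,d,S)=\frac{1}{|F|}\sum_{\sigma\in F}\tilde e_n(\mathcal F_\Theta,\{\sigma\},d,S).
\]
It then suffices to show that $\tilde e_n(\mathcal F_\Theta,\{\sigma\},d,S)=c(\sigma,d)\,d(\mathcal F_\Theta,\sigma)$ for some $c(\sigma,d)\ge0$ that does not depend on $\mathcal F_\Theta$.

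For the singleton step, fix $\sigma$ and $d$. Axiom \ref{ax:monotonicity}, applied in both directions, shows that $\tilde e_n(\cdot,\{\sigma\},d,S)$ depends on $\mathcal F_\Theta$ only through the number $t=d(\mathcal F_\Theta,\sigma)$. Write this dependence as $\phi_{\sigma,d}(t)$; Monotonicity makes $\phi_{\sigma,d}$ nondecreasing.

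Every value $t\ge0$ is attained by some class, and this is where affineness-type reasoning is not needed. The constant prediction $\sigma$ gives $t=0$. To get larger values, take a class whose members have predictions along the segment from $\sigma$ to $-\sigma$. The function $t\mapsto d(\cdot,\sigma)$ need not be continuous along this segment, so instead I would use classes $\mathcal F_\Theta$ consisting of a single function $f$ and let $d(f,\sigma)$ range over the image of $d(\cdot,\sigma)$. Since the attainable set may not be an interval, $\phi$ is only pinned down on that image, and the representation is required only there.

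On that image, use Axiom \ref{ax:homogeneity}(b) with $d(\mathcal F_{\Theta_1},\sigma)=\alpha\, d(\mathcal F_{\Theta_2},\sigma)$, followed by Axiom \ref{ax:homogeneity}(a). This gives
\[
\phi_{\sigma,d}(\alpha s)=\tilde e_n(\mathcal F_{\Theta_2},\{\sigma\},\alpha d,S)=\alpha\,\phi_{\sigma,d}(s).
\]
Fixing one attained $s_0>0$ (any non-$\sigma$ prediction, by coincidence) and setting $c(\sigma,d)=\phi_{\sigma,d}(s_0)/s_0$ yields linearity on all attained values that are positive multiples of $s_0$, which is all positive attained values. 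At $t=0$, Homogeneity(b) with $\alpha=2$ gives $\phi(0)=2\phi(0)$, so $\phi(0)=0$. Nonnegativity of $c$ follows from Monotonicity together with $\phi(0)=0$. For the scale invariance $c(\sigma,\alpha d)=c(\sigma,d)$: with $d'=\alpha d$ the same class gives $d'(\mathcal F_\Theta,\sigma)=\alpha t$, and Homogeneity(a) gives $\phi_{\sigma,\alpha d}(\alpha t)=\alpha\phi_{\sigma,d}(t)$, so the two ratios agree.

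The symmetry part will be the main obstacle, because it requires producing classes with matched distance profiles. Fix distinct $\sigma,\sigma'$ and take $F=\{\sigma,\sigma'\}$, with $\tau$ the swap. Let $\mathcal F_{\Theta_1}=\{\sigma'\}$ and $\mathcal F_{\Theta_2}=\{\sigma\}$. Coincidence and the assumed symmetry $d(\sigma',\sigma)=d(\sigma,\sigma')=:\delta>0$ give
\[
d(\mathcal F_{\Theta_1},\sigma)=\delta=d(\mathcal F_{\Theta_2},\sigma'),\qquad d(\mathcal F_{\Theta_1},\sigma')=0=d(\mathcal F_{\Theta_2},\sigma),
\]
so the hypothesis of Axiom \ref{ax:symmetry} holds with the bijection $\tau$. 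Hence $\tfrac12 c(\sigma,d)\delta=\tfrac12 c(\sigma',d)\delta$, so $c(\sigma,d)=c(\sigma',d)$, and we set $k(d)$ to this common value. The delicate point to watch is that the representation from the first part must hold for these specific values, which it does because they are attained by singleton classes.
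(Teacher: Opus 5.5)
Your proposal is correct and follows the paper's proof essentially step for step: Linearity reduces to singletons, two-sided Monotonicity makes the singleton value a function of $d(\mathcal{F}_\Theta,\sigma)$, Homogeneity (b) followed by (a) gives $\Phi(0)=0$ and proportionality on positive achievable values, and the same transposition argument on $F=\{\sigma,\sigma'\}$ with singleton classes equalizes the weights. One small clean-up: the domain of $\phi_{\sigma,d}$ should be the set of infima achieved by arbitrary nonempty classes, not every $t\ge0$ and not just the singleton image, but nothing breaks because your Homogeneity step already works with arbitrary classes.
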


\begin{proof}
    \textit{Step 1: singletons.} Applying Axiom \ref{ax:linearity} to the partition of $F$ into singletons,
    \begin{equation}\label{eq:singletons}
    \tilde{e}_n(\mathcal{F}_\Theta, F, d, S) = \frac{1}{|F|} \sum_{\sigma \in F} \tilde{e}_n(\mathcal{F}_\Theta, \{\sigma\}, d, S).
    \end{equation}

    \textit{Step 2: linearity on singletons.} Fix $\sigma$ and $d$. If $d(\mathcal{F}_{\Theta_1}, \sigma) = d(\mathcal{F}_{\Theta_2}, \sigma)$, then Axiom \ref{ax:monotonicity} applied in both directions gives $\tilde{e}_n(\mathcal{F}_{\Theta_1}, \{\sigma\}, d, S) = \tilde{e}_n(\mathcal{F}_{\Theta_2}, \{\sigma\}, d, S)$; hence there is a nondecreasing function $\Phi = \Phi_{\sigma, d}$, defined on the set of achievable values $T_{\sigma, d} := \{ d(\mathcal{F}_\Theta, \sigma) : \mathcal{F}_\Theta \subseteq \bar{\mathcal{F}} \text{ nonempty} \}$, such that $\tilde{e}_n(\mathcal{F}_\Theta, \{\sigma\}, d, S) = \Phi\big(d(\mathcal{F}_\Theta, \sigma)\big)$. Note that $0 \in T_{\sigma, d}$: take $\mathcal{F}_\Theta = \{f\}$ with $f|_S = \sigma$. {A positive value is also achievable: take a singleton class with sample predictions $-\sigma$, giving $d(-\sigma,\sigma)>0$ by coincidence.}

    We first show $\Phi(0) = 0$. Take $\mathcal{F}_\Theta$ with $d(\mathcal{F}_\Theta, \sigma) = 0$ and any $\alpha \neq 1$. The pair $(\mathcal{F}_{\Theta_1}, \mathcal{F}_{\Theta_2}) = (\mathcal{F}_\Theta, \mathcal{F}_\Theta)$ satisfies the hypothesis of Axiom \ref{ax:homogeneity}(b), since both profiles are zero, so $\tilde{e}_n(\mathcal{F}_\Theta, \{\sigma\}, d, S) = \tilde{e}_n(\mathcal{F}_\Theta, \{\sigma\}, \alpha d, S)$, while Axiom \ref{ax:homogeneity}(a) gives $\tilde{e}_n(\mathcal{F}_\Theta, \{\sigma\}, \alpha d, S) = \alpha\, \tilde{e}_n(\mathcal{F}_\Theta, \{\sigma\}, d, S)$. Hence $\Phi(0) = \alpha \Phi(0)$, so $\Phi(0) = 0$.

    Next, linearity of $\Phi$. Let $t, t' \in T_{\sigma, d} \setminus \{0\}$ be realized by $\mathcal{F}_{\Theta_1}$ and $\mathcal{F}_{\Theta_2}$, and set $\alpha := t/t'$. By Axiom \ref{ax:homogeneity}(b) and then Axiom \ref{ax:homogeneity}(a),
    \[
    \Phi(t) = \tilde{e}_n(\mathcal{F}_{\Theta_1}, \{\sigma\}, d, S) = \tilde{e}_n(\mathcal{F}_{\Theta_2}, \{\sigma\}, \alpha d, S) = \alpha\, \tilde{e}_n(\mathcal{F}_{\Theta_2}, \{\sigma\}, d, S) = \frac{t}{t'}\, \Phi(t').
    \]
    So $\Phi(t)/t$ is constant on $T_{\sigma, d} \setminus \{0\}$; call the constant $c(\sigma, d)$, which is nonnegative since $\Phi$ is nondecreasing with $\Phi(0) = 0$. Then $\tilde{e}_n(\mathcal{F}_\Theta, \{\sigma\}, d, S) = c(\sigma, d)\, d(\mathcal{F}_\Theta, \sigma)$ for every $\mathcal{F}_\Theta$, and combining with \eqref{eq:singletons} gives \eqref{eq:finite_representation}. The invariance $c(\sigma, \alpha d) = c(\sigma, d)$ follows from Axiom \ref{ax:homogeneity}(a){: for any achievable $t>0$, that axiom gives $c(\sigma,\alpha d)\alpha t=\alpha c(\sigma,d)t$, and cancellation proves the claim.}

    \textit{Step 3: symmetry at a symmetric discrepancy.} Let $d$ be symmetric on binary pairs and fix $\sigma \neq \sigma'$. Apply Axiom \ref{ax:symmetry} with the eligible set $F = \{\sigma, \sigma'\}$, the transposition $\tau$ of $\sigma$ and $\sigma'$, and the singleton classes $\mathcal{F}_{\Theta_1} = \{f'\}$ and $\mathcal{F}_{\Theta_2} = \{f\}$, where $f'|_S = \sigma'$ and $f|_S = \sigma$. Their profiles satisfy
    \[
    d(\mathcal{F}_{\Theta_1}, \sigma) = d(\sigma', \sigma) = d(\sigma, \sigma') = d(\mathcal{F}_{\Theta_2}, \tau(\sigma)), \qquad d(\mathcal{F}_{\Theta_1}, \sigma') = 0 = d(\mathcal{F}_{\Theta_2}, \tau(\sigma')),
    \]
    by symmetry and coincidence respectively. Axiom \ref{ax:symmetry} and \eqref{eq:finite_representation} then give
    \[
    \tfrac{1}{2}\, c(\sigma, d)\, d(\sigma', \sigma) = \tfrac{1}{2}\, c(\sigma', d)\, d(\sigma, \sigma'),
    \]
    and $d(\sigma', \sigma) = d(\sigma, \sigma') > 0$ by coincidence, so $c(\sigma, d) = c(\sigma', d)$.
\end{proof}

\begin{proof}[Proof of Proposition \ref{prop:axiomatization}]
    That $F = \cube$ is Lemma \ref{lemma:eligible_set}, and the form of $d_n$ is Lemma \ref{lemma:discrepancy_characterization}. By Lemma \ref{lemma:finite_representation}, the representation \eqref{eq:finite_representation} holds; since $d_n$ is symmetric on binary pairs (Lemma \ref{lemma:discrepancy_characterization}), the second part of that lemma gives $c(\cdot, d_n) \equiv k(d_n) \geq 0$; and $k(d_n) = 0$ would force $\tilde{e}_n(\cdot, \cdot, d_n, S) \equiv 0$ by \eqref{eq:finite_representation}, contradicting Axiom \ref{ax:nondegeneracy}, so $k(d_n) > 0$. Now evaluate \eqref{eq:finite_representation} at $F = \cube$: the uniform average over $\cube$ is an expectation over a vector $\sigma$ of iid Rademacher random variables, so, setting $b_n := k(d_n)\, a_n > 0$,
    \begin{align*}
    \tilde{e}_n(\mathcal{F}_\Theta, F, d_n, S) & = k(d_n)\, \mathbb{E}_\sigma \left[ \inf_{f_\theta \in \mathcal{F}_\Theta} a_n \left( 1 - \frac{1}{n} \sum_{i=1}^n f_\theta(x_i) \sigma_i \right) \right] \\
    & = b_n\, \mathbb{E}_\sigma \left[ 1 - \sup_{f_\theta \in \mathcal{F}_\Theta} \frac{1}{n} \sum_{i=1}^n \sigma_i f_\theta(x_i) \right] = b_n \left[ 1 - \tilde{R}_n(\mathcal{F}_\Theta, S) \right],
    \end{align*}
    where the middle equality factors out $a_n$ and applies Lemma \ref{lemma:inf_sup_correlation}.

    For the converse, $F = \cube$ satisfies condition 1 by Lemma \ref{lemma:eligible_set}, and $d^*_\gamma$ satisfies condition 2 by the sufficiency half of Lemma \ref{lemma:discrepancy_characterization}. For $\tilde{e}^*_n$: Axiom \ref{ax:nonnegativity} is immediate; for Axiom \ref{ax:monotonicity}, pointwise dominance of profiles is preserved by averaging; for Axiom \ref{ax:homogeneity}(a), $(\alpha d)(\mathcal{F}_\Theta, \sigma) = \alpha\, d(\mathcal{F}_\Theta, \sigma)$; for Axiom \ref{ax:homogeneity}(b), replace $d(\mathcal{F}_{\Theta_1}, \sigma)$ by $\alpha\, d(\mathcal{F}_{\Theta_2}, \sigma) = (\alpha d)(\mathcal{F}_{\Theta_2}, \sigma)$ under the sum; for Axiom \ref{ax:linearity}, an average over $F$ regroups as the $|F_k|/|F|$-weighted mixture of the cell averages; for Axiom \ref{ax:symmetry}, reindex the sum by the bijection $\tau$; and for Axiom \ref{ax:nondegeneracy}, $\tilde{e}^*_n(\{f\}, \{\sigma\}, d^*_\gamma, S) = \kappa \gamma \big( 1 - \frac{1}{n} \sum_i f(x_i) \sigma_i \big) > 0$ whenever $f|_S \neq \sigma$. The computation above, with $\kappa$ in place of $k(d_n)$ and $\gamma$ in place of $a_n$, gives \eqref{eq:axiomatization} with $b_n = \kappa \gamma$.
\end{proof}

\newpage
\bibliographystyle{ecta}
\bibliography{references}

@inproceedings{liang2015learning,
  title={Learning with square loss: Localization through offset rademacher complexity},
  author={Liang, Tengyuan and Rakhlin, Alexander and Sridharan, Karthik},
  booktitle={Conference on Learning Theory},
  pages={1260--1285},
  year={2015},
  organization={PMLR}
}

@article{bartlett2002rademacher,
  title={Rademacher and Gaussian complexities: Risk bounds and structural results},
  author={Bartlett, Peter L and Mendelson, Shahar},
  journal={Journal of Machine Learning Research},
  volume={3},
  number={Nov},
  pages={463--482},
  year={2002}
}

@article{bartlett2002model,
  title={Model selection and error estimation},
  author={Bartlett, Peter L and Boucheron, St{\'e}phane and Lugosi, G{\'a}bor},
  journal={Machine Learning},
  volume={48},
  pages={85--113},
  year={2002},
  publisher={Springer}
}

@article{bartlett2005local,
  title={Local rademacher complexities},
  author={Bartlett, Peter L and Bousquet, Olivier and Mendelson, Shahar},
  year={2005},
    volume={33},
   ISSN={0090-5364},
   url={http://dx.doi.org/10.1214/009053605000000282},
   DOI={10.1214/009053605000000282},
   number={4},
   journal={The Annals of Statistics},
   publisher={Institute of Mathematical Statistics}
}

@inproceedings{lozano2000model,
  title={Model selection using Rademacher penalization},
  author={Lozano, Fernando},
  booktitle={Proceedings of the 2nd ICSC Symp. on Neural Computation (NC2000). Berlin, Germany. ICSC Academic Press},
  year={2000}
}

@inproceedings{beigman2006learning,
  title={Learning from revealed preference},
  author={Beigman, Eyal and Vohra, Rakesh},
  booktitle={Proceedings of the 7th ACM Conference on Electronic Commerce},
  pages={36--42},
  year={2006},
  publisher={Association for Computing Machinery},
  doi={10.1145/1134707.1134712}
}

@article{farrell2021deep,
  title={Deep neural networks for estimation and inference},
  author={Farrell, Max H and Liang, Tengyuan and Misra, Sanjog},
  journal={Econometrica},
  volume={89},
  number={1},
  pages={181--213},
  year={2021},
  publisher={Wiley Online Library}
}

@book{hansen2022probability,
  title={Probability and statistics for economists},
  author={Hansen, Bruce},
  year={2022},
  publisher={Princeton University Press}
}

@article{fudenberg2022measuring,
  title={Measuring the completeness of economic models},
  author={Fudenberg, Drew and Kleinberg, Jon and Liang, Annie and Mullainathan, Sendhil},
  journal={Journal of Political Economy},
  volume={130},
  number={4},
  pages={956--990},
  year={2022},
  publisher={The University of Chicago Press Chicago, IL}
}

@article{gentzkow2024pricing,
  title={Pricing power in advertising markets: Theory and evidence},
  author={Gentzkow, Matthew and Shapiro, Jesse M and Yang, Frank and Yurukoglu, Ali},
  journal={American Economic Review},
  volume={114},
  number={2},
  pages={500--533},
  year={2024},
  publisher={American Economic Association 2014 Broadway, Suite 305, Nashville, TN 37203}
}

@Article{fudenberg2023restrictiveness,
  Title={How Flexible Is That Functional Form? Quantifying the Restrictiveness of Theories},
  Author={Fudenberg, Drew and Gao, Wayne and Liang, Annie},
  journal={Review of Economics and Statistics},
  volume={108},
  number={1},
  pages={194--209},
  year={2026},
  doi={10.1162/rest_a_01401}
}

@book{mitzenmacher2017probability,
  title={Probability and computing: Randomization and probabilistic techniques},
  author={Mitzenmacher, Michael and Upfal, Eli},
  year={2017},
  publisher={Cambridge university press}
}

@article{basu2020falsifiability,
  title={On the falsifiability and learnability of decision theories},
  author={Basu, Pathikrit and Echenique, Federico},
  journal={Theoretical Economics},
  volume={15},
  number={4},
  pages={1279--1305},
  year={2020},
  publisher={Wiley Online Library}
}

@article{schmeidler1989subjective,
  title={Subjective probability and expected utility without additivity},
  author={Schmeidler, David},
  journal={Econometrica: Journal of the Econometric Society},
  pages={571--587},
  year={1989},
  publisher={JSTOR}
}

@article{fudenberg2026model,
  title={Model Restrictiveness in Functional and Structural Settings},
  author={Fudenberg, Drew and Gao, Wayne Yuan and You, Zhiheng},
  journal={arXiv preprint arXiv:2602.07688},
  year={2026}
}

@book{koenker2005quantile,
  title={Quantile Regression},
  author={Koenker, Roger},
  series={Econometric Society Monographs},
  number={38},
  publisher={Cambridge University Press},
  year={2005},
  doi={10.1017/CBO9780511754098}
}

@article{liang2026using,
  author = {Annie Liang},
  title = {Using Machine Learning to Generate, Clarify, and Improve Economic Models},
  journal = {Journal of Economic Literature},
  note = {Accepted},
  year = {2026}
}

\end{document}